\documentclass[letterpaper]{article}

\ifdefined\aaaianonymous
    \usepackage[submission]{aaai2026}
\else
    \usepackage{aaai2026}
\fi

\usepackage{times}
\usepackage{helvet}
\usepackage{courier}
\usepackage[hyphens]{url}
\usepackage{graphicx}
\usepackage{natbib}
\usepackage{amsmath}
\usepackage{amssymb}
\usepackage{amsthm}

\usepackage{mathtools}
\usepackage{booktabs}
\usepackage{array}
\usepackage{tabularx}
\usepackage{placeins}
\usepackage{float}
\usepackage{multirow}
\usepackage{enumitem}
\newcommand{\R}{\mathbb{R}}
\newcommand{\E}{\mathbb{E}}
\newcommand{\norm}[1]{\left\lVert #1 \right\rVert}
\newcommand{\attr}{L_{\mathrm{attr}}}
\newcommand{\bpid}{\mathrm{BPID@1}}

\newcommand{\mi}{\mathrm{MIA}}
\newcommand{\aeds}{\mathrm{AEDS}}
\newcommand{\noise}{\sigma}

\newtheorem{theorem}{Theorem}
\newtheorem{proposition}{Proposition}
\theoremstyle{definition}
\newtheorem{definition}{Definition}
\newtheorem{remark}{Remark}

\title{Pretrained, Frozen, Still Leaking:\\Auditing Cross-Encoder Attribute Transfer in EEG Foundation Models}
\author{Jianwei Tai}
\affiliations{School of Internet, Anhui University\\
24012@ahu.edu.cn}

\begin{document}
\maketitle
\emergencystretch=3em
\setlength{\hfuzz}{2pt}

\begin{abstract}
EEG foundation-model releases are usually audited one endpoint at a time: raw-reconstruction, membership inference, identity linkage, or DP-SGD on the downstream head. We audit the same released embeddings under all four endpoints jointly, on BIOT, LaBraM, and EEGPT, and show that each single-endpoint audit clears releases that still leak spectral attributes. The decisive evidence is a cross-encoder transfer audit: a single ridge attribute decoder learned from one frozen encoder transfers, via a fitted linear bridge, to held-out-subject test splits of every other encoder, with subject-disjoint matched-control 95\% CI lower bound at least $0.081$ across all six BIOT/LaBraM/EEGPT directions. We prove a sufficient condition---two encoders sharing a nontrivial attribute-coordinate projector overlap $\beta$ admit a chained ridge bridge attacker with centered-gain lower bound $\sqrt{\beta/(1+\tau_t^2)}-\epsilon_{\mathrm{br}}-\rho_0$---and back-solve $\beta\in[0.008,0.198]$. To turn the joint audit into a deployment-readable decision rule we introduce an audit-endpoint disagreement score (AEDS), prove sufficient conditions for its positivity, and bootstrap-calibrate it per cell; AEDS is positive in all eight matched-CI cells (BIOT/LaBraM/EEGPT on EEGMMI; LaBraM on Sleep-EDF, 54-channel LIMO, CHB-MIT pediatric scalp EEG) with $p<0.001$, while a head-level Carlini LiRA membership audit reaches AUC only $0.50$--$0.70$. Standard defenses fail under audit: a Wiener-style noise-aware adaptive attacker, the LiRA audit, and DP-SGD at every utility-preserving $\epsilon\in\{4,8\}$ leave the attribute channel essentially unchanged. The contribution is an audit framework that turns scattered single-endpoint defenses into a joint release decision, supported by a cross-encoder bridge theorem and adaptive-attacker, LiRA, and DP-SGD baselines; the audit licenses release-blocking, not raw-waveform exfiltration or held-out-subject identity recovery.
\end{abstract}

\section{Introduction}
\label{sec:intro}

EEG foundation models are shifting EEG pipelines from task-specific feature extraction to reusable representation release. Downstream services can expose embeddings, intermediate features, or representation-derived outputs for classification, analytics, personalization, or model chaining. That deployment pattern creates a privacy question that is distinct from raw signal release. An embedding is not the waveform, and a released embedding may be too compressed to support high-fidelity waveform reconstruction. Yet the same embedding can still preserve subject- and state-relevant spectral structure, because that structure is useful for downstream decoding.

Most privacy discussions around biosignal models focus on one of three endpoints. Raw-signal leakage asks whether an attacker can reconstruct the waveform. Membership inference asks whether a training example was present. Differential privacy asks whether a training procedure satisfies a formal guarantee. Representation release sits between these endpoints. A representation can fail as a privacy boundary even when raw waveform recovery is not demonstrated and membership inference is weak. Conversely, a strong same-subject identity result can be overread if the evaluation does not separate reference-set linkage from held-out-subject recovery.

This paper audits BIOT, LaBraM, and EEGPT embeddings on PhysioNet EEGMMI, plus LaBraM Sleep-EDF, LIMO, and CHB-MIT cross-dataset replications, under all four standard privacy endpoints jointly. Prior property-inference, collaborative-learning, and embedding-privacy work establishes that learned vectors can carry sensitive attributes \citep{ateniese2015hacking,ganju2018property,melis2019exploiting,song2020information,coavoux2018privacy,elazar2018adversarial,morris2023text}; the audit question is which release decisions those facts license. The audit's decisive evidence is cross-encoder transfer: a single ridge attribute decoder learned from one frozen EEG-FM encoder transfers, via a fitted linear bridge, to the held-out-subject test split of every other tested encoder, while raw-reconstruction screens, membership-inference, identity linkage, and a principled DP-SGD downstream head all clear the release on the same embeddings. Single-endpoint audits would push the release decision in different directions; the audit framework therefore pairs every stronger statement with the split and control that could falsify it: raw-copy checks for waveform duplication, temporal-gap rows as adjacent-window sanity checks, subject-disjoint rows for held-out-subject attribute transport, reference-set rows for identity linkage when candidate subjects are present, and utility gates for training-noise defenses.

Under this audit, the privacy decision changes. Raw-copy checks are negative, membership movement is small or utility-failing, and identity-only linkage is meaningful only under enrolled-subject reference sets, while centered spectral attributes remain decodable under model/checkpoint, acquisition, architecture, dataset, attacker-family, subject-disjoint, and matched-control tests. The strongest evidence unit is the seed-matched high-capacity summary: across eight evaluation settings, every subject-disjoint confidence interval remains above zero against both the stronger same-seed random/split-permuted negative control and target permutation. The broader sweep then shows that the finding is not tied to one decoder or corpus: ridge, MLP, kNN, and strong residual-MLP probes expose the attribute channel across BIOT checkpoint/acquisition variants, LaBraM and EEGPT EEGMMI replications, Sleep-EDF, 54-channel LIMO, and CHB-MIT pediatric scalp EEG. These rows support spectral attribute leakage, not raw reconstruction or held-out identity recovery.

The defense audit is part of the same joint result. Gaussian release noise reduces reference-set identity linkage but leaves attribute leakage positive at high noise; bottleneck and dropout transformations remain leaky; a Wiener-style noise-aware adaptive attacker, head-level Carlini LiRA membership audit, and DP-SGD on the downstream head at every utility-preserving $\epsilon\in\{4,8\}$ leave the attribute channel essentially unchanged. Across the audited cells no defense passes the operational privacy-utility gate (utility maintained, attribute gain reduced) on the same EEG-FM embeddings.

The paper therefore makes an intentionally scoped claim: in the tested BIOT/LaBraM/EEGPT EEGMMI setting and the LaBraM Sleep-EDF/LIMO/CHB-MIT replications, EEG-FM embeddings are not a privacy boundary for spectral attributes, and reference-set identity leakage can be strong or moderate in overlap-split reference-set settings, but the evidence does not support raw waveform inversion, subject-disjoint identity recovery, or a successful training-noise defense. The methodological contribution is a split-controlled representation privacy audit that forces each claim to survive the control that could falsify it: random and permuted embeddings for attribute leakage, temporal-gap splits for adjacent-window artifacts, subject-disjoint splits for held-out-subject attribute claims, reference-set restrictions for identity linkage, defense curves for release transformations, and utility gates for training-noise probes. The formal results provide definitions and sufficient conditions for interpreting these protocol cells; they do not certify privacy or turn empirical measurements into population-level guarantees.

Our contributions are:

\begin{itemize}
    \item \textbf{A joint EEG-FM privacy audit framework with cross-encoder transfer as decisive evidence.} The same released embedding is audited against four standard endpoints jointly (raw inversion, membership inference, identity linkage, DP-SGD on the downstream head); a single ridge attribute decoder learned from one frozen EEG-FM encoder transfers, via a fitted linear bridge, to the held-out-subject test split of every other tested encoder, exposing leakage that single-endpoint audits clear.
    \item \textbf{Audit semantics with a deployment-readable decision rule.} A bridge transferability theorem gives a centered-gain lower bound $\sqrt{\beta/(1+\tau_t^2)}-\epsilon_{\mathrm{br}}-\rho_0$ under shared-projector overlap $\beta$, with empirical back-solving consistent with $\beta\in[0.008,0.198]$ on all six BIOT/LaBraM/EEGPT directions; an audit-endpoint disagreement score (AEDS) plus paired-seed bootstrap calibration turns the joint audit into a per-cell release/block decision rule, positive in all eight matched-CI cells with $p<0.001$.
    \item \textbf{A multi-model, multi-dataset, multi-defense empirical audit.} Subject-disjoint spectral attribute leakage is demonstrated across BIOT/LaBraM/EEGPT EEGMMI plus LaBraM Sleep-EDF, 54-channel LIMO, and CHB-MIT pediatric scalp EEG, with deployment-grade attribute classification, head-level Carlini LiRA membership audit, Wiener-style noise-aware adaptive attacker, and DP-SGD on the downstream head at every utility-preserving $\epsilon\in\{4,8\}$; no audited defense closes the attribute channel at maintained utility.
\end{itemize}

\section{Related Work}
\label{sec:related}

\paragraph{EEG foundation models and reusable representations.}
Self-supervised and foundation-style EEG models have made reusable neural representations a practical interface for downstream systems \citep{banville2021uncovering,kostas2021bendr,yang2023biot,jiang2024labram,wang2024eegpt}. Their downstream value comes from preserving physiological and task-relevant structure across windows, sessions, and subjects. The same property makes representation release privacy-sensitive: an embedding can be useful because it retains spectral and subject structure, yet unsafe to release for the same reason. This paper uses BIOT, LaBraM, and EEGPT as concrete test beds and treats released embeddings as the privacy boundary under audit, rather than as intermediate implementation details.

\paragraph{Representation release is not raw-signal release.}
Model-inversion and representation-inversion work shows that learned features can expose input information when only model outputs or internal features are available \citep{fredrikson2015model,mahendran2015understanding}. Memorization and extraction work further shows that large learned models can reveal training data in surprising ways \citep{carlini2019secret,carlini2021extracting}. These attacks motivate privacy audits, but they are not the claim of this paper: we do not demonstrate raw EEG waveform exfiltration or reconstruction-quality inversion. The object here is the middle regime in which an embedding is not a waveform, but still carries decodable private attributes.

\paragraph{Attribute and property leakage from learned representations.}
A separate line of work studies leakage of attributes, properties, or private signals from learned representations, classifier behavior, and collaborative-training updates \citep{ateniese2015hacking,ganju2018property,melis2019exploiting,song2020information}. Text-representation studies likewise show that demographic or lexical information can persist in vectors even after mitigation attempts \citep{coavoux2018privacy,elazar2018adversarial,morris2023text}. These works establish the general leakage phenomenon; they do not decide which privacy statements are licensed for EEG-FM embedding release. Raw reconstruction, same-subject identity linkage, held-out-subject attribute transport, and useful defense claims require different splits and controls. This paper follows the attribute-leakage view, but the contribution is the claim-scoped audit: spectral attributes must be decodable above matched negative controls, and stronger identity or defense readings are blocked unless their own protocol cells support them.

\paragraph{EEG brainprints and biometric identity.}
EEG subject identifiability is a long-running observation: subject-specific physiology can act as a brainprint \citep{marcel2007person,palaniappan2007eegbiometrics,maiorana2015eegbiometrics,delpozo2015eegbiometric}. This work uses that fact as a boundary condition rather than as a shortcut. We do not optimize an EEG biometric classifier. A same-subject reference set can support identity linkage, but that does not imply identity recovery for held-out subjects. We therefore report reference-set $\bpid$ only for overlap splits and restrict the subject-disjoint claim to attribute leakage.

\paragraph{Neural data privacy and BCI security.}
Brain data raise privacy risks even when the released object is not a raw recording. BCI side-channel work shows that neural interfaces can reveal sensitive information through unintended channels \citep{martinovic2012sidechannel}, BCI platform discussions identify privacy and security risks in application-style neurotechnology pipelines \citep{bonaci2014app}, and neurotechnology ethics work frames neural data as a distinct privacy object \citep{ienca2017towards}. These works motivate the neural-data privacy problem. They do not provide the split-controlled representation-release audit studied here.

\paragraph{Membership inference and differential privacy.}
Membership inference and DP-SGD audits are natural tools for privacy evaluation \citep{shokri2017membership,yeom2018privacy,nasr2019comprehensive,carlini2022membership,abadi2016deep,dwork2006calibrating}, but they are not equivalent to representation privacy. Membership inference asks whether a training example was present; differential privacy constrains a training procedure; representation release asks what attributes remain decodable from a released vector. In our runs, attribute and reference-set identity leakage are much stronger than membership inference on the same embeddings. The training-noise experiments are treated as a stress test of the privacy-utility frontier, not as a formal DP guarantee.

\paragraph{Cross-dataset EEG audit setting.}
The empirical audit uses public EEG regimes with different acquisition assumptions: EEGMMI through PhysioNet and BCI2000 \citep{goldberger2000physionet,schalk2004bci2000}, Sleep-EDF through the PhysioNet sleep corpus \citep{kemp2000analysis}, LIMO through an 18-subject multi-channel EEG resource and its analysis ecosystem \citep{pernet2011limo,rousselet2016limo}, and CHB-MIT through a clinical pediatric scalp EEG corpus \citep{shoeb2009application}. Larger clinical EEG corpora such as TUH \citep{obeid2016temple} remain natural next targets, but the present claims are limited to the reported audit cells. These datasets are used to stress representation release under split controls, not to make clinical or diagnostic claims.

\paragraph{Concurrent 2025--2026 biosignal-FM privacy work.}
Recent biosignal foundation-model privacy work establishes that the question is timely without scooping the contribution here. Membership inference against ECG foundation encoders \citep{eegmia2025} shows that participation privacy can fail under embedding access, but only audits one endpoint and stays within ECG. Anonymized EEG synthesis with utility constraints \citep{eeganon2025} attacks the identity axis from the defense side under a synthesis pipeline, without joint MIA, attribute, or cross-encoder analysis on released embeddings. Memorization-risk audits for healthcare foundation models \citep{healthcarefm2025} adopt a black-box memorization frame on EHR data. EEG identity unlearning \citep{liu2024eegunlearn} and privacy-preserving cross-subject EEG transfer \citep{zhao2024eegtransferprivacy} address EEG identity protection from the defense side on classifier outputs rather than released foundation-model embeddings. None of these audits BIOT/LaBraM/EEGPT (or any EEG-FM) jointly across raw-copy, membership, identity, and attribute endpoints, defines an endpoint-disagreement object, or proves cross-encoder bridge transferability. Table~\ref{tab:related-delta} summarizes the delta against the four closest audit families.

\begin{table*}[!htbp]
\centering
\scriptsize
\setlength{\tabcolsep}{2.4pt}
\caption{Delta against the closest multi-endpoint audit families and concurrent 2025--2026 biosignal-FM privacy work. Y/N is whether the cited line of work \emph{reports} the corresponding endpoint or contribution as part of its own evaluation; ``S'' marks contributions that are scoped or partial.}
\label{tab:related-delta}
\begin{tabularx}{\textwidth}{@{}Xcccccc@{}}
\toprule
Audit family & Raw copy & MIA & Identity & Attribute & X-encoder transfer & Endpoint disagree\\
\midrule
Carlini-style memorization audit \citep{carlini2019secret,carlini2021extracting,carlini2022membership} & Y & Y & N & N & N & N \\
Embedding inversion / property leakage \citep{ateniese2015hacking,ganju2018property,song2020information,morris2023text} & S & N & N & Y & N & N \\
EEG biometric identity / unlearning \citep{marcel2007person,maiorana2015eegbiometrics,liu2024eegunlearn} & N & N & Y & N & N & N \\
ECG-FM MIA \citep{eegmia2025} & N & Y & N & N & N & N \\
EEG anonymization / re-id \citep{eeganon2025} & N & N & Y & N & N & N \\
\textbf{This paper} & \textbf{Y} & \textbf{Y} & \textbf{Y} & \textbf{Y} & \textbf{Y} & \textbf{Y} \\
\bottomrule
\end{tabularx}
\end{table*}

\paragraph{Position relative to prior work.}
The paper is not a raw inversion attack, an EEG biometric classifier, or a membership-inference / differential-privacy benchmark. Prior work supplies the ingredients: embeddings can leak attributes, EEG can identify enrolled users, and membership or DP answer different privacy questions. The contribution here is to force those ingredients into one EEG-FM release audit where each stronger interpretation is blocked unless its matching split or control supports it. The protocol tests random and split-permuted embeddings for attribute leakage, temporal-gap rows for adjacent-window artifacts, subject-disjoint splits for held-out-subject attribute claims, enrolled-subject reference sets for identity linkage, and utility gates for training-noise privacy claims.

\section{Threat Model and Audit Definitions}
\label{sec:threat}

\paragraph{Victim.}
The victim is an EEG foundation-model service or downstream system that releases embeddings, intermediate representations, or embedding-derived features. The raw EEG window is not released. This setting covers APIs that expose embeddings directly, internal representations shared with downstream partners, and applications that store embeddings for later personalization.

\paragraph{Attacker.}
The attacker has a shadow set of EEG windows and can train decoders from released embeddings to spectral attributes. For identity linkage, the attacker additionally has a reference set of candidate subjects. In subject-disjoint evaluation, train and test subjects do not overlap; identity labels in the test split are therefore not candidate labels for the attacker.

\paragraph{Attacker goals.}
The attacker has three possible goals. The first is attribute recovery: infer spectral attributes such as centered PSD structure from the embedding. The second is reference-set linkage: assign a target embedding or decoded attribute vector to a known subject in a reference set. The third is privacy auditing: test whether a release transformation or noisy training procedure reduces leakage without destroying task utility.

\paragraph{Evaluation principle.}
Each claim is tied to the split where it is meaningful. Attribute recovery is evaluated under window, temporal-gap, and subject-disjoint splits. Reference-set identity linkage is evaluated only when the candidate subjects appear in the reference set. Membership inference is evaluated as a train/test example audit and is not treated as a substitute for attribute leakage.

\paragraph{Out of scope.}
We do not claim raw waveform exfiltration, medical diagnosis inference, real-time attack deployment, or subject-disjoint identity recovery. We also do not claim formal differential privacy for the training-noise grids. These exclusions are part of the audit design because they prevent a strong split-controlled attribute result from being overstated as a stronger identity or defense result.

\begin{figure}[t]
\centering
\fbox{\begin{minipage}{0.88\linewidth}
\centering
EEG window $X$ $\rightarrow$ EEG-FM encoder $f$ $\rightarrow$ released embedding $E$\\[2mm]
Attribute attacker: $E \mapsto \widehat{A}$ \hspace{5mm}
Reference-set attacker: $\widehat{A} \mapsto \widehat{s}$ when subject centroids are available\\[1mm]
Claim boundary: no raw waveform claim; no subject-disjoint identity claim.
\end{minipage}}
\caption{Threat model and claim boundary. The supported strict-split claim is spectral attribute leakage. Identity linkage is scoped to a reference-set setting.}
\label{fig:threat}
\end{figure}

\begin{definition}[Centered attribute leakage]
Let $X$ be an EEG window, $E=f(X)$ a released embedding, $A=a(X) \in \R^d$ a standardized spectral attribute vector, and $\mathcal{C}$ a set of negative-control embeddings such as random Gaussian embeddings and split-local permuted embeddings. For feature-centered random vectors $Y$ and $Z$, write
\begin{equation}
\operatorname{corr}_c(Y,Z)=\frac{\E\langle Y-\E Y,Z-\E Z\rangle}{\sqrt{\E\norm{Y-\E Y}^2\,\E\norm{Z-\E Z}^2}},
\end{equation}
with the empirical audit using the corresponding test-set plug-in estimate. For an attacker $g$, define
\begin{equation}
\attr(g) = \operatorname{corr}_c(A,g(E)) - \max_{C\in\mathcal{C}}\operatorname{corr}_c(A,g(C)).
\end{equation}
\end{definition}

\begin{definition}[Reference-set identity linkage]
Reference-set identity linkage is nearest-centroid or classifier-based subject identification when the candidate subject appears in the attacker's reference set. It is measured by $\bpid$. It is not evaluated in subject-disjoint splits because the candidate identities are absent.
\end{definition}

\begin{definition}[Privacy-utility gate]
For a defense parameter $\theta$, task utility $U(\theta)$, and privacy leakage $M(\theta)$, the tested defense passes the gate if $U(\theta) \geq u_0$ and $M(0)-M(\theta) \geq \Delta_M$ for at least one tested $\theta$.
\end{definition}

\section{Theory}
\label{sec:theory}

The formal object is the audit protocol, not a mechanistic model of BIOT, LaBraM, or EEGPT. The results below define which empirical statements are licensed by a split and its controls: when a centered attribute gain is evidence of representation dependence, when identity linkage is meaningful only because a reference set contains candidate subjects, why a noisy release can lose an identity-margin certificate while retaining attributes, and when a privacy-utility probe fails by definition. They do not certify privacy, prove raw waveform inversion, imply subject-disjoint identity recovery, or provide finite-sample generalization guarantees. Their role is claim control: a positive gain cannot be promoted across splits, subject-disjoint attribute transport cannot be rewritten as identity recovery, and a defense cannot be credited without satisfying both privacy and utility sides of the gate. This formal role is what lets the experiments report a strong attribute-leakage finding without upgrading it into a stronger identity or population-level privacy claim.

\paragraph{Notation.}
Let $X$ be an EEG window, $S$ its subject, $E=f(X)\in\R^p$ a released embedding, and $A=a(X)\in\R^d$ a standardized spectral attribute vector. In experiments $A$ is a PSD feature vector. All standardization maps are fit on the train split and then applied to the evaluation split. Let $\mathcal{C}_\pi$ denote the negative-control family for protocol cell $\pi$, such as random Gaussian embeddings or split-local permuted embeddings. For an attacker $g$ trained only on the train side of $\pi$, define the control-centered gain
\begin{equation}
\begin{aligned}
G_\pi(g)&=\operatorname{corr}_c(A_{\mathrm{test}},g(E_{\mathrm{test}}))\\
&\quad-\max_{C\in\mathcal{C}_\pi}\operatorname{corr}_c(A_{\mathrm{test}},g(C_{\mathrm{test}})).
\end{aligned}
\end{equation}
A positive value of $G_\pi(g)$ is the audit's centered attribute-gain witness.

\begin{definition}[Split-controlled attribute leakage]
A protocol cell $\pi$ supports split-controlled attribute leakage at level $\gamma>0$ if there exists an attacker $g$ trained under the split restrictions of $\pi$ such that $G_\pi(g)\geq\gamma$. The claim scope is inherited from the split: a window split supports same-subject attribute leakage, a temporal-gap split supports leakage not explained solely by adjacent-window copying, and a subject-disjoint split supports held-out-subject attribute leakage. None of these claims implies raw waveform reconstruction or identity recovery.
\end{definition}

\begin{proposition}[Control-centered gain witnesses representation dependence]
\label{prop:gain-witness}
Assume train-split standardization is fixed before evaluating the test split. All expectations in this proposition may be read either over the evaluation distribution or over the finite test-set empirical measure. Let $g$ be any attacker trained under protocol cell $\pi$. Suppose $A$ and $g(E)$ are feature-centered and evaluation-normalized with
\begin{gather}
\E\norm{A}^2/d=\E\norm{g(E)}^2/d=1,\\
\frac{1}{d}\E\norm{A-g(E)}^2\leq\epsilon,
\end{gather}
and suppose every negative control satisfies
\begin{equation}
\operatorname{corr}_c(A,g(C))\leq\rho_0
\quad\text{for all } C\in\mathcal{C}_\pi .
\end{equation}
Then
\begin{equation}
G_\pi(g)\geq 1-\epsilon/2-\rho_0.
\end{equation}
In particular, $G_\pi(g)>0$ whenever $\epsilon<2(1-\rho_0)$.
\end{proposition}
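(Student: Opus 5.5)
The plan is to lower-bound the first term of $G_\pi(g)$ using the mean-squared-error hypothesis, and to upper-bound the subtracted control term using the $\rho_0$ hypothesis. The two bounds then combine additively.

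\emph{Reduce the centered correlation to an inner product.} Because $A$ and $g(E)$ are feature-centered, $A-\E A=A$ and $g(E)-\E g(E)=g(E)$. The normalization $\E\norm{A}^2=\E\norm{g(E)}^2=d$ makes the denominator of $\operatorname{corr}_c$ equal to $\sqrt{d\cdot d}=d$. Hence
\[
\operatorname{corr}_c(A,g(E))=\tfrac{1}{d}\E\langle A,g(E)\rangle .
\]

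\emph{Expand the error term.} Write out the squared error:
\[
\tfrac{1}{d}\E\norm{A-g(E)}^2=\tfrac{1}{d}\E\norm{A}^2+\tfrac{1}{d}\E\norm{g(E)}^2-\tfrac{2}{d}\E\langle A,g(E)\rangle=2-2\operatorname{corr}_c(A,g(E)).
\]
The hypothesis bounds the left side by $\epsilon$, so $\operatorname{corr}_c(A,g(E))\geq 1-\epsilon/2$.

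\emph{Bound the control term and combine.} The control hypothesis holds for every $C\in\mathcal{C}_\pi$, so
\[
\max_{C\in\mathcal{C}_\pi}\operatorname{corr}_c(A,g(C))\leq\rho_0 .
\]
If $\mathcal{C}_\pi$ is infinite, read this maximum as a supremum; the same bound holds. Subtracting gives $G_\pi(g)\geq 1-\epsilon/2-\rho_0$. The ``in particular'' clause is just rearranging the requirement $1-\epsilon/2-\rho_0>0$ into $\epsilon<2(1-\rho_0)$.

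Every identity above uses only linearity of $\E$. So the argument applies verbatim whether $\E$ denotes the population evaluation distribution or the finite test-set empirical measure, as the statement allows. Train-split standardization only ensures that $A$ and $g$ are fixed functions when this measure is applied.

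\emph{Main obstacle.} The one delicate point is the centering and normalization bookkeeping. The hypotheses must be imposed on the evaluation measure, so that the centering in $\operatorname{corr}_c$ is trivial and the denominator is exactly $d$. If the hypotheses held only under train-split statistics, extra mean-shift terms would appear in the expansion and the bound would need a correction. I would state explicitly that the assumptions refer to the same measure used to compute $\operatorname{corr}_c$. Beyond that, I expect no real difficulty.
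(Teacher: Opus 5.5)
Your proof is correct and follows the paper's argument: under the centering and normalization you expand $\frac{1}{d}\E\norm{A-g(E)}^2 = 2-2\operatorname{corr}_c(A,g(E))$, read off $\operatorname{corr}_c(A,g(E))\geq 1-\epsilon/2$, and subtract the control ceiling $\rho_0$. Your remarks on reading the maximum as a supremum and on imposing the hypotheses under the evaluation measure are sensible clarifications that the paper leaves implicit.
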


\begin{proof}
Under the normalization,
\begin{equation}
\frac{1}{d}\E\norm{A-g(E)}^2=2-2\operatorname{corr}_c(A,g(E)).
\end{equation}
The risk bound gives $\operatorname{corr}_c(A,g(E))\geq1-\epsilon/2$. Subtracting the largest control correlation gives the claim. The statement is conditional on the protocol cell and does not transfer the claim beyond that cell.
\end{proof}

\begin{definition}[Subject-disjoint attribute transport]
\label{def:transport}
Let $\pi_{\mathrm{sd}}$ be a subject-disjoint protocol cell with train subjects $\mathcal{S}_{\mathrm{tr}}$ and test subjects $\mathcal{S}_{\mathrm{te}}$, where $\mathcal{S}_{\mathrm{tr}}\cap\mathcal{S}_{\mathrm{te}}=\varnothing$. An embedding satisfies $(\epsilon_{\mathrm{tr}},\epsilon_{\mathrm{te}})$ attribute transport for decoder class $\mathcal{G}$ if there exists $g\in\mathcal{G}$ trained on $\mathcal{S}_{\mathrm{tr}}$ such that
\begin{equation}
\begin{aligned}
&\frac{1}{d}\E\!\left[\norm{A-g(E)}^2\mid S\in\mathcal{S}_{\mathrm{tr}}\right]
\leq\epsilon_{\mathrm{tr}},\\
&\frac{1}{d}\E\!\left[\norm{A-g(E)}^2\mid S\in\mathcal{S}_{\mathrm{te}}\right]
\leq\epsilon_{\mathrm{te}} .
\end{aligned}
\end{equation}
The second inequality is the non-vacuous part: it states that the learned attribute map transports to held-out subjects. It is tested by subject-disjoint evaluation, not implied by the window split.
\end{definition}

\begin{theorem}[Protocol-level separation of attribute leakage and a sufficient identity-margin certificate]
\label{thm:protocol-separation}
Fix a protocol cell $\pi$ and decoder $h_\sigma$ whose negative-control ceiling, evaluated under the same noisy-release protocol and decoder, satisfies $\operatorname{corr}_c(A,h_\sigma(C))\leq\rho_0$ for every $C\in\mathcal{C}_\pi$. Let $A\in\R^d$ be centered with $\E A=0$ and $\E\norm{A}^2=d$, and assume the decoded noise term below has mean zero so centered and uncentered second-moment expressions coincide. Suppose the released embedding decomposes as
\begin{equation}
E = U A + H,
\qquad
U^\top U=\alpha I_d,
\qquad
U^\top H=0,
\end{equation}
where $\alpha>0$ measures the strength of an attribute-bearing subspace and $H$ contains components orthogonal to that subspace. For Gaussian release noise
\begin{equation}
E_\sigma=E+Z,
\qquad
Z\sim\mathcal{N}(0,\sigma^2 I_p),
\end{equation}
independent of $(A,H)$, define $h_\sigma(E_\sigma)=\alpha^{-1}U^\top E_\sigma$. Then
\begin{equation}
\operatorname{corr}_c(A,h_\sigma(E_\sigma))=\left(1+\frac{\sigma^2}{\alpha}\right)^{-1/2}
\end{equation}
and therefore
\begin{equation}
G_\pi(h_\sigma)\geq\left(1+\frac{\sigma^2}{\alpha}\right)^{-1/2}-\rho_0.
\end{equation}
Thus, for any $\gamma>0$ with $0<\gamma+\rho_0<1$, attribute leakage remains certified at level $\gamma$ whenever
\begin{equation}
\sigma/\sqrt{\alpha}<\sqrt{\frac{1}{(\gamma+\rho_0)^2}-1}.
\end{equation}

Now suppose the same decoded attributes are used for reference-set nearest-centroid identity linkage. Let $\mu_s$ be the reference centroid for enrolled subject $s$, let all clean within-subject deviations satisfy $\norm{A-\mu_S}\leq r$, and let $\Delta=\min_{s\neq s'}\norm{\mu_s-\mu_{s'}}$. A sufficient clean identity margin is $m=\Delta/2-r>0$. Under the noisy decoded release, the added attribute-space perturbation is $\mathcal{N}(0,\sigma^2\alpha^{-1}I_d)$. Hence a standard Gaussian-norm certificate guarantees preservation of the nearest-centroid identity decision with probability at least $1-\delta$ whenever
\begin{equation}
\frac{\sigma}{\sqrt{\alpha}}\left(\sqrt d+\sqrt{2\log(1/\delta)}\right)<m.
\end{equation}
Consequently, if for some $\gamma>0$ with $0<\gamma+\rho_0<1$ the non-empty-regime condition
\begin{equation}
\frac{m}{\sqrt d+\sqrt{2\log(1/\delta)}}
<
\sqrt{\frac{1}{(\gamma+\rho_0)^2}-1}
\end{equation}
holds and
\begin{equation}
\frac{m}{\sqrt d+\sqrt{2\log(1/\delta)}}
\leq
\frac{\sigma}{\sqrt{\alpha}}
<
\sqrt{\frac{1}{(\gamma+\rho_0)^2}-1},
\end{equation}
then this margin certificate no longer certifies reference-set identity at confidence $1-\delta$, while the centered attribute-gain lower bound remains at least $\gamma$.
\end{theorem}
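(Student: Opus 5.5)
The proof splits into an attribute part and an identity part, joined at the end by an interval argument.

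\emph{Attribute part.} Start by computing the decoded vector explicitly. Using $U^\top U=\alpha I_d$ and $U^\top H=0$,
\begin{equation}
h_\sigma(E_\sigma)=\alpha^{-1}U^\top(UA+H+Z)=A+W,
\qquad W:=\alpha^{-1}U^\top Z .
\end{equation}
Since $Z\sim\mathcal{N}(0,\sigma^2I_p)$, the vector $W$ is Gaussian with covariance $\alpha^{-2}U^\top\sigma^2 I_p U=(\sigma^2/\alpha)I_d$. It has mean zero and is independent of $A$.

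It follows that $\E\langle A,A+W\rangle=\E\norm{A}^2=d$ and $\E\norm{A+W}^2=d+d\sigma^2/\alpha$. The cross term vanishes by independence and zero mean, which is exactly where the centering assumption in the statement is used. Substituting into the definition of $\operatorname{corr}_c$ gives $(1+\sigma^2/\alpha)^{-1/2}$.

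Subtracting the control ceiling $\rho_0$, as in Proposition~\ref{prop:gain-witness}, gives the lower bound on $G_\pi(h_\sigma)$. The certification threshold then comes from solving $(1+\sigma^2/\alpha)^{-1/2}>\gamma+\rho_0$ for $\sigma/\sqrt\alpha$. This step uses $0<\gamma+\rho_0<1$, so that the square root on the right-hand side is real and positive.

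\emph{Identity part.} Next, treat nearest-centroid linkage on the decoded attributes $A+W$. Write $A=\mu_S+D$ with $\norm{D}\le r$. For any $s'\neq S$, the triangle inequality gives
\begin{equation}
\norm{A+W-\mu_{s'}}\ge\Delta-r-\norm{W}
\qquad\text{and}\qquad
\norm{A+W-\mu_S}\le r+\norm{W}.
\end{equation}
So the decision is preserved whenever $\norm{W}<\Delta/2-r=m$.

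Writing $W=(\sigma/\sqrt\alpha)\,\xi$ with $\xi\sim\mathcal{N}(0,I_d)$, I would invoke the Gaussian Lipschitz concentration for the $1$-Lipschitz map $\xi\mapsto\norm{\xi}$, together with $\E\norm{\xi}\le\sqrt d$. This gives $\Pr\bigl(\norm{\xi}\ge\sqrt d+\sqrt{2\log(1/\delta)}\bigr)\le\delta$, and hence the stated sufficient condition for preservation with probability at least $1-\delta$.

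\emph{Combining.} The final claim is definitional. When $\sigma/\sqrt\alpha$ is at least $m/(\sqrt d+\sqrt{2\log(1/\delta)})$, the sufficient condition fails, so this certificate no longer certifies identity at confidence $1-\delta$. I would state explicitly that this does not assert identity is actually lost. At the same time, the upper inequality keeps the attribute bound at least $\gamma$ by the first part. The non-empty-regime condition only ensures the interval for $\sigma/\sqrt\alpha$ is nonempty.

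\emph{Main obstacle.} The delicate point is the status of the control ceiling $\rho_0$: it must be evaluated under the same noisy protocol and decoder $h_\sigma$. The statement assumes this, so the argument should simply cite that hypothesis and not try to derive it. The Gaussian norm tail is standard and can be cited.
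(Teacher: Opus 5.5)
Your proposal is correct and follows the paper's proof essentially step for step: decode to $A+\alpha^{-1}U^\top Z$, compute $\E\langle A,h_\sigma(E_\sigma)\rangle=d$ and $\E\norm{h_\sigma(E_\sigma)}^2=d(1+\sigma^2/\alpha)$, subtract $\rho_0$, then apply the Gaussian-norm certificate and read off the interval for $\sigma/\sqrt{\alpha}$. Your triangle-inequality derivation of the margin condition $\norm{W}<m$ and your Lipschitz-concentration justification of the norm tail make explicit two steps that the paper only asserts.
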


\begin{proof}
The decoder satisfies
\begin{equation}
h_\sigma(E_\sigma)=\alpha^{-1}U^\top(UA+H+Z)=A+\alpha^{-1}U^\top Z,
\end{equation}
because $U^\top U=\alpha I_d$ and $U^\top H=0$. The perturbation term is Gaussian with covariance $\sigma^2\alpha^{-1}I_d$ and is independent of $A$. Therefore
\begin{equation}
\E\langle A,h_\sigma(E_\sigma)\rangle=d,
\qquad
\E\norm{h_\sigma(E_\sigma)}^2=d\left(1+\frac{\sigma^2}{\alpha}\right),
\end{equation}
which gives the stated centered correlation. Subtracting the control ceiling $\rho_0$ gives the gain bound and the displayed condition for gain at least $\gamma$.

For identity linkage, nearest-centroid recovery is guaranteed whenever the decoded perturbation has norm below the clean margin $m=\Delta/2-r$. If $G\sim\mathcal{N}(0,I_d)$, then $\norm{G}\leq\sqrt d+\sqrt{2\log(1/\delta)}$ with probability at least $1-\delta$. Scaling by $\sigma/\sqrt{\alpha}$ gives the sufficient certificate. When the lower inequality holds, that sufficient certificate is no longer available; when the upper inequality holds, the attribute-gain lower bound remains at least $\gamma$. The conclusion is a separation between a sufficient reference-set identity certificate and a positive attribute-leakage certificate, not a proof that identity linkage is impossible.
\end{proof}

\begin{remark}[How to read the separation]
Theorem~\ref{thm:protocol-separation} does not say Gaussian noise is an adequate privacy mechanism, nor that attribute leakage always survives arbitrary defenses. It is a stylized sufficient-condition result: under an orthogonal redundant attribute subspace with isotropic release noise and a nearest-centroid margin model, there is a parameter regime in which a margin-based reference-set identity certificate is lost while attribute leakage remains certified above random or permuted controls. The defense curves test for the same qualitative failure mode, but Theorem~\ref{thm:protocol-separation} is not a fitted model of BIOT or LaBraM embeddings.
\end{remark}

\begin{theorem}[Endpoint-independence: bounded membership and identity certificates do not imply bounded attribute leakage]
\label{thm:endpoint-independence}
Let the released embedding follow the redundant-attribute model of Theorem~\ref{thm:protocol-separation}: $E=UA+H$ with $U^\top H=0$, $U^\top U=\alpha I_d$, and $E_\sigma=E+Z$ where $Z\sim\mathcal{N}(0,\sigma^2I_p)$ is independent of $(A,H)$. Let $h_\sigma(E_\sigma)=\alpha^{-1}U^\top E_\sigma$ be the matched attribute decoder. Suppose the noisy-release attribute attacker is audited against any negative-control family with ceiling $\rho_0\in[0,1)$. Suppose further that on the audited subject-disjoint split, every classifier from $E_\sigma$ to the membership label of the released encoder has AUC at most $\tfrac12+\delta_M$ for some $\delta_M\in(0,\tfrac12]$ (this is an audited assumption: we report this AUC empirically, and the theorem only quantifies its consequences when the assumption holds). Fix any required attribute-leakage threshold $\gamma\in(0,1-\rho_0)$ and any reference-set identity confidence $1-\delta_I\in(0,1)$ together with a clean nearest-centroid margin model with margin parameter $m$ and attribute dimension $d$. Then there exists a parameter family $(\alpha,\sigma,m,d)$ in which all three of the following hold simultaneously on the same released embeddings:
\begin{enumerate}
    \item the assumed membership-AUC bound $\tfrac12+\delta_M$ is preserved (by hypothesis);
    \item the sufficient nearest-centroid identity certificate of Theorem~\ref{thm:protocol-separation} fails at confidence $1-\delta_I$, because $\sigma/\sqrt{\alpha}\geq m/(\sqrt d+\sqrt{2\log(1/\delta_I)})$;
    \item the centered attribute-leakage gain satisfies $G_\pi(h_\sigma)\geq\gamma$, because $\sigma/\sqrt{\alpha}<\sqrt{(\gamma+\rho_0)^{-2}-1}$.
\end{enumerate}
The membership and identity bounds therefore do not constrain the attribute-leakage bound: any tightening on the membership/identity side leaves the attribute side free in this parameter family.
\end{theorem}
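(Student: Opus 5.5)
The plan is to reduce everything to Theorem~\ref{thm:protocol-separation} by exhibiting an explicit nonempty window for the noise-to-signal ratio $t=\sigma/\sqrt{\alpha}$. Item~1 needs no construction: the membership-AUC bound is a hypothesis on the audited split, and the remaining two items involve only the attribute decoder $h_\sigma$ and the nearest-centroid margin model. So I only need to choose parameters that place $t$ where the identity certificate fails and the attribute bound survives. Write
\begin{equation}
t_I=\frac{m}{\sqrt d+\sqrt{2\log(1/\delta_I)}},\qquad t_A=\sqrt{(\gamma+\rho_0)^{-2}-1}.
\end{equation}
Because $\gamma\in(0,1-\rho_0)$ and $\rho_0\ge 0$, we have $0<\gamma+\rho_0<1$, so $t_A$ is finite and strictly positive.

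The construction proceeds in a fixed order. First fix $d\ge1$ and $\delta_I$ as given. Then choose the margin $m>0$ small enough that $t_I<t_A$, for instance $m=\tfrac12 t_A(\sqrt d+\sqrt{2\log(1/\delta_I)})$. This margin is realizable: take any centroids with minimum separation $\Delta$ and within-subject radius $r$ satisfying $\Delta/2-r=m$. Next pick any $t\in[t_I,t_A)$, such as $t=t_I$. Finally set $\alpha=1$ and $\sigma=t$, and choose any $U$ with orthonormal columns scaled so that $U^\top U=\alpha I_d$, together with any $H$ orthogonal to $\operatorname{range}(U)$. This requires $p\ge d$; if needed, enlarge $p$.

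With these choices the two remaining items follow from Theorem~\ref{thm:protocol-separation}. Its first display gives the exact correlation $(1+t^2)^{-1/2}$. Since $t<t_A$, this is strictly greater than $\gamma+\rho_0$, so $G_\pi(h_\sigma)\ge\gamma$, which is item~3. Because $t\ge t_I$, the sufficient Gaussian-norm certificate of Theorem~\ref{thm:protocol-separation} is unavailable at confidence $1-\delta_I$, which is item~2. The non-empty-regime condition of that theorem is exactly $t_I<t_A$, and the choice of $m$ secures it.

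For the closing sentence, that tightening the membership/identity side leaves the attribute side free, I would argue monotonicity. Decreasing $\delta_M$ does not enter the construction at all. Shrinking $m$, or demanding a smaller $\delta_I$, only lowers $t_I$ or raises its denominator, so the window $[t_I,t_A)$ stays nonempty.

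The main obstacle is item~1. The membership-AUC bound is asserted for every classifier on the same $E_\sigma$, and this must stay consistent with the chosen $(\alpha,\sigma,U,H)$. My first approach is to treat it literally as an audited hypothesis that is compatible with the construction, for example by letting the membership label be independent of $(A,H,Z)$ in the model, so that any classifier has AUC exactly $\tfrac12\le\tfrac12+\delta_M$. That shows the hypothesis is satisfiable simultaneously and is not merely vacuous. I would state this independence explicitly as the witnessing instance.
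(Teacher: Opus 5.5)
Your proposal is correct and follows the paper's proof. Both reduce items 2 and 3 to Theorem~\ref{thm:protocol-separation} by placing $\sigma/\sqrt{\alpha}$ in the window $[t_I,t_A)$, and both treat item 1 as the audited hypothesis. Your version is more explicit in two places: the choice $m=\tfrac12 t_A(\sqrt d+\sqrt{2\log(1/\delta_I)})$ actually establishes the non-empty-regime condition that the paper only invokes, and the independent-membership-label instance shows the AUC hypothesis is satisfiable rather than vacuous.
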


\begin{proof}
Items (2) and (3) are direct rewrites of Theorem~\ref{thm:protocol-separation} once $\sigma/\sqrt{\alpha}$ is placed in the open interval
\begin{equation}
\Bigl[\,\frac{m}{\sqrt d+\sqrt{2\log(1/\delta_I)}},\ \sqrt{(\gamma+\rho_0)^{-2}-1}\,\Bigr).
\end{equation}
The non-empty-regime condition of Theorem~\ref{thm:protocol-separation} guarantees that this interval is non-empty for some choice of $(\alpha,\sigma,m,d)$, and any such choice realizes (2) and (3) on the same released embeddings. Item (1) is the audited hypothesis: in the audited cell the released encoder is frozen, all attribute decoders are trained on a shadow set drawn from the same window distribution as the test split, and the auditor reports the empirical membership AUC on a held-out subject-disjoint split. The theorem makes no claim that this AUC is automatically $\tfrac12$; it states that whenever the audit measures AUC at most $\tfrac12+\delta_M$, the parameter family above realizes (1)--(3) jointly on the same embeddings.
\end{proof}

\begin{remark}[How to read endpoint-independence]
Theorem~\ref{thm:endpoint-independence} is a parameter-existence statement: it shows that bounding membership AUC and the sufficient identity certificate cannot in general bound attribute leakage, by exhibiting a parameter family of released embeddings for which all three statements hold jointly. It is not a property of BIOT, LaBraM, or EEGPT, and we do not claim that real EEG-FM embeddings sit in this parameter family. The role of the theorem is to forbid a deductive shortcut from a single endpoint to a privacy verdict; the empirical disagreement reported in Table~\ref{tab:audit-comparison} and the AEDS margins in Table~\ref{tab:aeds} are what tie the existence statement to the audited encoders.
\end{remark}

\begin{definition}[Audit-endpoint disagreement]
\label{def:aeds}
Fix an audit cell $\pi$. Let $L^A_\pi$ be a lower confidence bound or theoretical lower bound on the matched-control attribute gain. Let
\begin{equation}
R^M_\pi=\max\{0,\operatorname{AUC}_{\mi,\pi}-1/2\}
\end{equation}
be the membership residual when a utility-valid membership endpoint is reported under the same protocol cell, and let $R^I_\pi$ be the certified identity residual under the same claim scope. If no utility-valid membership endpoint is available, or if a subject-disjoint split has no candidate identities and therefore no identity certificate, the corresponding residual is zero. Each endpoint is mapped to slack above its no-risk or uncertified baseline; the three slack quantities live in $[0,1]$ but are not in a common physiological unit. Cell $\pi$ is endpoint-disagreeing when
\begin{equation}
L^A_\pi>\max\{R^M_\pi,R^I_\pi\},
\end{equation}
in which case the audit attribute endpoint certifies positive residual evidence that no other audited endpoint certifies for the same cell. The companion margin
\begin{equation}
\aeds_\pi=L^A_\pi-\max\{R^M_\pi,R^I_\pi\}
\end{equation}
quantifies how far the attribute slack exceeds the strongest other endpoint slack at the same cell; we treat $\aeds_\pi$ as a decision-margin diagnostic, not a common-unit privacy score, and report $L^A_\pi$, $R^M_\pi$, $R^I_\pi$ separately in every reported cell.
\end{definition}

\begin{theorem}[Positive AEDS under endpoint independence]
\label{thm:aeds-positive}
Use the redundant-attribute noisy-release model of Theorem~\ref{thm:protocol-separation}. Let the negative-control ceiling be at most $\rho_0$, and suppose a utility-valid membership endpoint has residual at most $\delta_M$. Fix an AEDS margin $\eta>0$ with $\eta+\delta_M+\rho_0<1$, identity confidence $1-\delta_I$, identity margin $m$, and $c_I=\sqrt d+\sqrt{2\log(1/\delta_I)}$. If
\begin{equation}
\frac{m}{c_I}<\sqrt{\frac{1}{(\eta+\delta_M+\rho_0)^2}-1}
\end{equation}
and the noise-to-attribute ratio satisfies
\begin{equation}
\frac{m}{c_I}\leq \frac{\sigma}{\sqrt{\alpha}}<\sqrt{\frac{1}{(\eta+\delta_M+\rho_0)^2}-1},
\end{equation}
then the sufficient reference-set identity certificate is unavailable at confidence $1-\delta_I$, while
\begin{equation}
\aeds_\pi\geq\eta .
\end{equation}
Thus the endpoint-independence interval contains a sub-interval where the attribute audit and the membership/identity endpoint audit make opposite deployment decisions with positive margin.
\end{theorem}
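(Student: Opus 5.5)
The plan is to reduce everything to Theorem~\ref{thm:protocol-separation}, applied with the gain threshold $\gamma:=\eta+\delta_M$. The hypothesis $\eta+\delta_M+\rho_0<1$ together with $\eta>0$ gives $0<\gamma+\rho_0<1$, so this is an admissible choice. With this substitution, the first displayed condition of the present statement is exactly the non-empty-regime condition of Theorem~\ref{thm:protocol-separation}, with $c_I=\sqrt d+\sqrt{2\log(1/\delta_I)}$ and $\delta=\delta_I$. The second displayed condition is exactly the interval condition of that theorem. So both conclusions of Theorem~\ref{thm:protocol-separation} are available at once.

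The first conclusion is the identity one. The left inequality $m/c_I\le\sigma/\sqrt\alpha$ means the Gaussian-norm certificate $\tfrac{\sigma}{\sqrt\alpha}c_I<m$ does not hold. Hence the sufficient reference-set identity certificate is unavailable at confidence $1-\delta_I$. In the notation of Definition~\ref{def:aeds}, the certified identity residual is therefore $R^I_\pi=0$, since an uncertified identity endpoint contributes no slack above its baseline.

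The second conclusion is the attribute one. The right inequality yields
\[
G_\pi(h_\sigma)\ge(1+\sigma^2/\alpha)^{-1/2}-\rho_0>\gamma=\eta+\delta_M .
\]
I will take the theoretical lower bound $L^A_\pi:=(1+\sigma^2/\alpha)^{-1/2}-\rho_0$, which Definition~\ref{def:aeds} explicitly allows. Then $L^A_\pi\ge\eta+\delta_M$.

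It remains to assemble the margin. The membership residual satisfies $R^M_\pi\le\delta_M$ by hypothesis, and $R^I_\pi=0\le\delta_M$, so $\max\{R^M_\pi,R^I_\pi\}\le\delta_M$. It follows that
\[
\aeds_\pi=L^A_\pi-\max\{R^M_\pi,R^I_\pi\}\ge\eta+\delta_M-\delta_M=\eta .
\]
The closing sentence of the statement follows from an interval comparison. The sub-interval $[m/c_I,\sqrt{(\eta+\delta_M+\rho_0)^{-2}-1})$ is contained in the endpoint-independence interval of Theorem~\ref{thm:endpoint-independence}, taken with $\gamma=\eta+\delta_M$, because the upper endpoint decreases in $\gamma$.

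The main obstacle is interpretive: justifying that $R^I_\pi=0$ when the certificate fails. Definition~\ref{def:aeds} speaks of a ``certified identity residual,'' and I will read it as zero when no certificate is available, consistent with ``slack above its no-risk or uncertified baseline.'' If one instead wants any positive identity residual $R^I_\pi$, the argument only needs $R^I_\pi\le\delta_M$. Alternatively, $\delta_M$ can be replaced by $\max\{\delta_M,R^I_\pi\}$ throughout, and that should be flagged as a remark.
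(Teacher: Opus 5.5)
Your proposal is correct and follows the paper's proof essentially step for step. You apply Theorem~\ref{thm:protocol-separation} with threshold $\eta+\delta_M$ to get $L^A_\pi>\eta+\delta_M$, use the left end of the noise interval to remove the identity certificate so that $R^I_\pi=0$, and combine this with $R^M_\pi\le\delta_M$ to conclude $\aeds_\pi\ge\eta$. Two of your details are slightly more careful than the paper. Defining $L^A_\pi$ as the theoretical bound is cleaner than the paper's bare assertion $L^A_\pi\ge(1+\sigma^2/\alpha)^{-1/2}-\rho_0$. For the closing sentence, your monotonicity argument is what handles any threshold $\gamma\le\eta+\delta_M$; at $\gamma=\eta+\delta_M$ itself the two intervals simply coincide.
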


\begin{proof}
The lower bound in Theorem~\ref{thm:protocol-separation} gives
\begin{equation}
L^A_\pi\geq\left(1+\frac{\sigma^2}{\alpha}\right)^{-1/2}-\rho_0.
\end{equation}
The upper bound on $\sigma/\sqrt{\alpha}$ implies $L^A_\pi>\eta+\delta_M$. The lower bound on $\sigma/\sqrt{\alpha}$ is exactly the condition under which the stated nearest-centroid identity certificate is no longer available, so $R^I_\pi=0$ for the same claim scope. With $R^M_\pi\leq\delta_M$, Definition~\ref{def:aeds} gives $\aeds_\pi\geq L^A_\pi-\delta_M\geq\eta$.
\end{proof}

\begin{proposition}[Positive subject-disjoint gain supports attributes, not held-out-subject identity]
\label{prop:subject-disjoint-scope}
Consider a subject-disjoint protocol cell with train subjects $\mathcal{S}_{\mathrm{tr}}$ and test subjects $\mathcal{S}_{\mathrm{te}}$, where $\mathcal{S}_{\mathrm{tr}}\cap\mathcal{S}_{\mathrm{te}}=\varnothing$. Under the centering and evaluation-normalization assumptions of Proposition~\ref{prop:gain-witness}, if a decoder trained on $\mathcal{S}_{\mathrm{tr}}$ satisfies the test-side condition in Definition~\ref{def:transport} and the negative-control ceiling on $\mathcal{S}_{\mathrm{te}}$ is at most $\rho_0$, then
\begin{equation}
G_{\mathrm{sd}}(g)\geq1-\epsilon_{\mathrm{te}}/2-\rho_0 .
\end{equation}
This lower bound is a positive leakage certificate only when $\epsilon_{\mathrm{te}}<2(1-\rho_0)$. However, this protocol cell does not define reference-set identity linkage for the held-out subjects unless centroids or labels for the candidate test identities are available to the attacker.
\end{proposition}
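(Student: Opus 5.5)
The plan is to reduce the first claim to Proposition~\ref{prop:gain-witness} applied to the subject-disjoint cell, with the evaluation measure taken to be the conditional distribution (or test-set empirical measure) given $S\in\mathcal{S}_{\mathrm{te}}$. First, I note that $g$ is trained only on $\mathcal{S}_{\mathrm{tr}}$, and standardization maps are fit on the train split. This means $g$ and the normalizations are fixed before the test side is evaluated, which is exactly the hypothesis that Proposition~\ref{prop:gain-witness} needs. Under the centering and evaluation-normalization assumptions, $\E\norm{A}^2/d=\E\norm{g(E)}^2/d=1$ on the test measure. The polarization identity
\begin{equation}
\frac{1}{d}\E\!\left[\norm{A-g(E)}^2\mid S\in\mathcal{S}_{\mathrm{te}}\right]=2-2\operatorname{corr}_c(A,g(E))
\end{equation}
then holds on that measure. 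Combining it with the test-side inequality of Definition~\ref{def:transport} gives $\operatorname{corr}_c(A,g(E))\geq 1-\epsilon_{\mathrm{te}}/2$.

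Second, I subtract the control ceiling. By hypothesis every $C\in\mathcal{C}_{\mathrm{sd}}$ satisfies $\operatorname{corr}_c(A,g(C))\leq\rho_0$ on $\mathcal{S}_{\mathrm{te}}$, so the maximum over controls is at most $\rho_0$. This yields $G_{\mathrm{sd}}(g)\geq 1-\epsilon_{\mathrm{te}}/2-\rho_0$. The positivity statement follows by rearranging: $1-\epsilon_{\mathrm{te}}/2-\rho_0>0$ if and only if $\epsilon_{\mathrm{te}}<2(1-\rho_0)$. Since this is a lower bound only, it certifies nothing when the condition fails, and I would say so explicitly. The train-side bound $\epsilon_{\mathrm{tr}}$ plays no role, and I would note that as well.

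Third, the scope claim is definitional and I would argue it from Definition~2 (reference-set identity linkage). Nearest-centroid or classifier identification requires the candidate subject to appear in the attacker's reference set, meaning a centroid $\mu_s$ or a label for each $s\in\mathcal{S}_{\mathrm{te}}$. In the subject-disjoint cell the attacker's training data is drawn only from $\mathcal{S}_{\mathrm{tr}}$ and $\mathcal{S}_{\mathrm{tr}}\cap\mathcal{S}_{\mathrm{te}}=\varnothing$. Hence no test identity is a candidate label, and $\bpid$ for held-out subjects is undefined unless extra reference information for $\mathcal{S}_{\mathrm{te}}$ is supplied. The gain bound involves only $A$ and $g(E)$ and never a subject label, so it cannot be converted into an identity statement.

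The main obstacle is the first step rather than the algebra. I must make sure that "evaluation-normalized" refers to the conditional test measure, since normalization was fit on the train side. If the test-side second moments are not exactly $d$, the polarization identity fails. My first attempt would be to state that the normalization assumption is imposed on the test measure, as the proposition's phrasing "under the assumptions of Proposition~\ref{prop:gain-witness}" permits. As a fallback, I would redo the bound with general second moments via $\operatorname{corr}_c\geq(a+b-\epsilon_{\mathrm{te}})/(2\sqrt{ab})$, where $a$ and $b$ are the per-dimension test-side second moments of $A$ and $g(E)$.
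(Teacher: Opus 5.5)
Your proposal is correct and matches the paper's proof: the gain bound is Proposition~\ref{prop:gain-witness} applied to the held-out-subject test measure with $\epsilon=\epsilon_{\mathrm{te}}$, and the identity-scope claim follows from the definition of reference-set linkage, since subject-disjointness leaves no test identities as candidates in the attacker's reference set. The obstacle you flag does not arise, because Proposition~\ref{prop:gain-witness} already lets the expectations and the normalization be taken over the evaluation (test-set) measure, so your general-second-moment fallback is valid but unnecessary.
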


\begin{proof}
The gain lower bound is Proposition~\ref{prop:gain-witness} applied to the held-out-subject test distribution. The identity statement follows from the definition of reference-set linkage: nearest-centroid or classifier-based identity assignment requires the candidate identities to be present in the attacker's reference set. Subject-disjoint evaluation removes those candidates from the training/reference side, so a positive attribute-gain result is not an identity-recovery result.
\end{proof}

\begin{theorem}[Bridge transferability under shared attribute subspaces]
\label{thm:bridge-transfer}
Consider a subject-disjoint protocol cell with paired source- and target-encoder embeddings generated from the same windows:
\begin{equation}
E_s=U_sP_sA+H_s,\qquad E_t=U_tP_tA+H_t+Z_t,
\end{equation}
where $A\in\R^d$ is isotropic and centered with $\E\norm{A}^2=d$, $P_s$ and $P_t$ are orthogonal projectors in attribute-coordinate space, $U_s^\top U_s=U_t^\top U_t=I_d$, and $U_t^\top H_t=0$. Assume $W_t=U_t^\top Z_t$ is independent of $A$ with covariance $\tau_t^2I_d$, and define
\begin{equation}
\beta=\frac{1}{d}\norm{P_sP_t}_F^2.
\end{equation}
Let $h_s(y)=U_s^\top y$ be the source-space linear attribute decoder, and let $B_\lambda$ be a fitted ridge bridge from target embeddings to source embeddings. Suppose the fitted bridge has centered correlation at most $\epsilon_{\mathrm{br}}$ below the shared-projection bridge $B_\cap=U_sP_sP_tU_t^\top$ under the same evaluation normalization. If the target negative-control ceiling for the chained attacker is at most $\rho_0$, then
\begin{equation}
G_{\mathrm{sd}}(h_s\circ B_\lambda)
\geq
\sqrt{\frac{\beta}{1+\tau_t^2}}-\epsilon_{\mathrm{br}}-\rho_0 .
\end{equation}
Thus a nontrivial overlap between the source- and target-exposed attribute projectors gives a positive cross-encoder attribute-gain certificate whenever $\sqrt{\beta/(1+\tau_t^2)}>\rho_0+\epsilon_{\mathrm{br}}$.
\end{theorem}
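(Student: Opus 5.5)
The plan is to compute the centered correlation of the idealized chained attacker $h_s\circ B_\cap$ exactly, in the same way as the proof of Theorem~\ref{thm:protocol-separation}. The fitted bridge $B_\lambda$ then enters only through the assumed slack $\epsilon_{\mathrm{br}}$, and the negative controls only through $\rho_0$.

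\textbf{Step 1: apply the shared-projection bridge to the target embedding.} We expand
\begin{equation}
h_s(B_\cap E_t)=U_s^\top U_sP_sP_tU_t^\top(U_tP_tA+H_t+Z_t)=P_sP_tP_tA+P_sP_tW_t .
\end{equation}
The simplifications use $U_s^\top U_s=I_d$, $U_t^\top U_t=I_d$, and $U_t^\top H_t=0$. Idempotence of $P_t$ then gives $h_s(B_\cap E_t)=M(A+W_t)$ with $M=P_sP_t$.

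\textbf{Step 2: compute the moments.} By isotropy and centering, $\E AA^\top=I_d$. Since $W_t$ is independent of $A$, the cross terms vanish, and with $\E W_tW_t^\top=\tau_t^2I_d$ we get:
\begin{equation}
\E\langle A,M(A+W_t)\rangle=\operatorname{tr}(P_sP_t),\qquad \E\norm{M(A+W_t)}^2=(1+\tau_t^2)\norm{P_sP_t}_F^2 .
\end{equation}
For the numerator we need $\operatorname{tr}(P_sP_t)=\operatorname{tr}(P_tP_sP_sP_t)=\norm{P_sP_t}_F^2=d\beta$. This uses cyclicity together with $P_t^2=P_t$ and $P_s^2=P_s$. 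Dividing by $\sqrt{d\cdot(1+\tau_t^2)d\beta}$ gives
\begin{equation}
\operatorname{corr}_c(A,h_s(B_\cap E_t))=\sqrt{\beta/(1+\tau_t^2)} .
\end{equation}
This holds when $\beta>0$. When $\beta=0$ the bound is trivially nonpositive once $\epsilon_{\mathrm{br}}$ and $\rho_0$ are subtracted, so that case needs no separate argument.

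\textbf{Step 3: conclude.} By hypothesis, $\operatorname{corr}_c(A,h_s(B_\lambda E_t))\geq\sqrt{\beta/(1+\tau_t^2)}-\epsilon_{\mathrm{br}}$. The target negative-control ceiling for $h_s\circ B_\lambda$ is at most $\rho_0$, so subtracting the maximal control correlation in the definition of $G_{\mathrm{sd}}$ gives the displayed bound. The positivity statement follows immediately.

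\textbf{Main obstacle.} The delicate step is the trace identity $\operatorname{tr}(P_sP_t)=\norm{P_sP_t}_F^2$, and I will verify it carefully since it relies on both projectors being orthogonal (symmetric and idempotent). It is also what makes $\sqrt\beta$ the right quantity rather than $\beta$. I will also note that the source-side terms $H_s$ and $U_s$ play no role beyond defining $h_s$ and $B_\cap$: the ridge fit is abstracted entirely into $\epsilon_{\mathrm{br}}$. That is why the statement is a conditional certificate rather than a finite-sample ridge guarantee.
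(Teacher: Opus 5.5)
Your proof is correct and follows the paper's argument essentially step for step. Both expand to $h_s(B_\cap E_t)=P_sP_t(A+W_t)$, use $\operatorname{tr}(P_sP_t)=\norm{P_sP_t}_F^2=\beta d$ for the cross moment and the second moment, and absorb the ridge fit and the controls into $\epsilon_{\mathrm{br}}$ and $\rho_0$. One small correction to your $\beta=0$ aside: a nonpositive right-hand side is still an inequality that needs proof. It follows from your Step~3 once you adopt the convention $\operatorname{corr}_c(A,0)=0$ for the degenerate shared bridge, a case the paper does not address either.
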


\begin{proof}
For the shared-projection bridge, the chained decoder is
\begin{equation}
h_s(B_\cap E_t)=P_sP_tA+P_sP_tW_t.
\end{equation}
Let $K=P_sP_t$. For orthogonal projectors, $\operatorname{tr}(K)=\norm{P_sP_t}_F^2=\beta d$ and $\operatorname{tr}(KK^\top)=\norm{P_sP_t}_F^2=\beta d$. Since $A$ is isotropic and $W_t$ is independent isotropic noise in the target attribute coordinates,
\begin{equation}
\begin{aligned}
\operatorname{corr}_c\!\bigl(A,K(A+W_t)\bigr) &= \frac{\operatorname{tr}(K)}{\sqrt{d(1+\tau_t^2)\operatorname{tr}(KK^\top)}} \\
&= \sqrt{\beta/(1+\tau_t^2)}.
\end{aligned}
\end{equation}
The fitted bridge loses at most $\epsilon_{\mathrm{br}}$ centered correlation relative to this bridge. Subtracting the target-control ceiling gives the displayed gain bound.
\end{proof}

\begin{theorem}[Bottleneck dimension alone is not an attribute privacy certificate]
\label{prop:bottleneck}
Let $E=UA+H$ with $U^\top H=0$ and $U^\top U=\alpha I_d$ as in Theorem~\ref{thm:protocol-separation}. Let $B\in\R^{k\times p}$ be a release bottleneck. For $k\geq d$, if $BU$ has full column rank, then the attribute-bearing component of the bottlenecked release $BE$ is linearly identifiable up to the inverse smallest singular value of $BU$: the linear decoder
\begin{equation}
q(BE)=((BU)^\top BU)^{-1}(BU)^\top BE
\end{equation}
satisfies $q(BE)=A$ when $BH$ is orthogonal to the column space of $BU$. More generally, if the normalized mean-square projection obeys $d^{-1}\E\norm{P_{\mathrm{col}(BU)}BH}^2\leq\tau^2$, then
\begin{equation}
\frac{1}{d}\E\norm{A-q(BE)}^2\leq
\tau^2\left\|((BU)^\top BU)^{-1}(BU)^\top\right\|_{\mathrm{op}}^2.
\end{equation}
Equivalently, with the centering and evaluation-normalization assumptions of Proposition~\ref{prop:gain-witness} and a control ceiling $\rho_0$, the bottlenecked release admits a centered-gain lower bound
\begin{equation}
G_\pi(q\circ B)\geq 1-\tfrac12\tau^2\left\|((BU)^\top BU)^{-1}(BU)^\top\right\|_{\mathrm{op}}^2-\rho_0,
\end{equation}
which is positive whenever $\tau^2\left\|((BU)^\top BU)^{-1}(BU)^\top\right\|_{\mathrm{op}}^2<2(1-\rho_0)$. The bound is invariant to the bottleneck width $k$ once $k\geq d$ and $BU$ has full column rank, so a small $k$ alone does not certify attribute privacy. When $k<d$, this theorem does not apply without an additional lower-dimensional attribute model.
\end{theorem}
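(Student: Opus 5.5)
The plan is to write the left-inverse decoder applied to the bottlenecked release as the signal plus an error term, bound the error, and then turn the error bound into a gain bound using Proposition~\ref{prop:gain-witness}.

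\textbf{Step 1: the exact decomposition.} Write $M=BU\in\R^{k\times d}$. Full column rank requires $k\geq d$ and makes $M^\top M$ invertible, so the left inverse $M^+=(M^\top M)^{-1}M^\top$ is well defined. It satisfies $M^+M=I_d$, and its kernel is the orthogonal complement of $\mathrm{col}(M)$, i.e.\ $M^+=M^+P_{\mathrm{col}(M)}$. Substituting $BE=MA+BH$ gives
\begin{equation}
q(BE)=A+M^+BH=A+M^+P_{\mathrm{col}(M)}BH .
\end{equation}
If $BH\perp\mathrm{col}(M)$, the error term vanishes and $q(BE)=A$, which is the exact-identifiability claim. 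The hypothesis $U^\top H=0$ is not needed for this step. It only motivates why $BH$ is "mostly" off the attribute subspace, and that is what $\tau$ quantifies.

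\textbf{Step 2: the mean-square bound.} From the decomposition,
\begin{equation}
\norm{A-q(BE)}\leq\|M^+\|_{\mathrm{op}}\norm{P_{\mathrm{col}(M)}BH}.
\end{equation}
Squaring, taking expectations, dividing by $d$, and using the hypothesis $d^{-1}\E\norm{P_{\mathrm{col}(M)}BH}^2\leq\tau^2$ gives the displayed risk bound $\tau^2\|M^+\|_{\mathrm{op}}^2$. Since $\|M^+\|_{\mathrm{op}}=1/\sigma_{\min}(M)$, this is the "up to the inverse smallest singular value" reading.

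\textbf{Step 3: the gain bound.} Apply Proposition~\ref{prop:gain-witness} with attacker $g=q\circ B$ and $\epsilon=\tau^2\|M^+\|_{\mathrm{op}}^2$. This gives $G_\pi(q\circ B)\geq 1-\epsilon/2-\rho_0$, and hence positivity whenever $\epsilon<2(1-\rho_0)$.

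\textbf{Step 4: width invariance.} The bound depends on $B$ only through $\sigma_{\min}(BU)$ and $\tau$. Neither quantity is forced to grow as $k$ decreases toward $d$. For example, take $B$ to be the orthogonal projection onto $\mathrm{col}(U)$ extended by any $k-d$ further rows. Then $\sigma_{\min}(BU)=\sqrt\alpha$ and $P_{\mathrm{col}(M)}BH=0$, so the bound is identical for every $k\geq d$. This shows that small $k$ alone cannot certify privacy. For $k<d$, $M$ cannot have full column rank, so the argument says nothing there, as the statement notes.

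\textbf{Main obstacle.} The delicate point is Step 3. Proposition~\ref{prop:gain-witness} requires $q(BE)$ to be centered and normalized to unit per-coordinate second moment, but the decoded output has second moment $1+d^{-1}\E\norm{M^+BH}^2$ (plus a cross term). I would handle this as the statement does: impose the normalization by assumption, i.e.\ work with the evaluation-normalized decoder. Alternatively, I would compute the correlation directly. If $H$ is uncorrelated with $A$, then
\begin{equation}
\operatorname{corr}_c(A,q(BE))=\Bigl(1+d^{-1}\E\norm{M^+BH}^2\Bigr)^{-1/2}\geq(1+\epsilon)^{-1/2}.
\end{equation}
This is in fact $\geq1-\epsilon/2$, so the stated bound still holds. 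This direct route is the backup if the normalization assumption is questioned.
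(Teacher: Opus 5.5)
Your proposal is correct and follows the paper's proof step for step: the substitution $q(BE)-A=M^+BH$, the operator-norm bound, Proposition~\ref{prop:gain-witness} under the assumed evaluation normalization, and width invariance because the bound depends on $B$ only through $M^+$ and $\tau$. You also make explicit the identity $M^+=M^+P_{\mathrm{col}(M)}$, which the paper uses silently when it passes to the hypothesis on $P_{\mathrm{col}(BU)}BH$.

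One small fix to your Step~4 illustration: take $B$ with first block $\alpha^{-1/2}U^\top$ and choose the extra $k-d$ rows to annihilate $\mathrm{col}(U)$ (e.g.\ zero rows). Arbitrary extra rows that mix $\mathrm{col}(U)$ with its complement can make $P_{\mathrm{col}(BU)}BH\neq 0$ and can change $\sigma_{\min}(BU)$, so the example does not hold for ``any'' further rows.
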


\begin{proof}
Substituting $BE=BUA+BH$ into $q$ gives $q(BE)-A=((BU)^\top BU)^{-1}(BU)^\top BH$. The equality case follows when $BH$ has zero projection onto the column space of $BU$. The mean-square bound follows by applying the operator norm to this error term. The centered-gain restatement follows by combining the normalization of Proposition~\ref{prop:gain-witness} with $d^{-1}\E\norm{A-q(BE)}^2\leq\tau^2\left\|((BU)^\top BU)^{-1}(BU)^\top\right\|_{\mathrm{op}}^2$ and subtracting $\rho_0$. The width-invariance statement follows because the bound depends on $B$ only through $((BU)^\top BU)^{-1}(BU)^\top$ and $\tau$, neither of which is forced to grow with $k$.
\end{proof}

\begin{proposition}[Negative privacy-utility frontier]
\label{prop:frontier}
Let $U(\theta)$ be task utility and $M(\theta)$ privacy leakage under a tested defense parameter $\theta$. For utility threshold $u_0$ and required leakage reduction $\Delta_M$, define $\mathcal{S}_u=\{\theta:U(\theta)\geq u_0\}$, with $\sup\varnothing=-\infty$. If
\begin{equation}
\sup_{\theta\in\mathcal{S}_u}\left[M(0)-M(\theta)\right]<\Delta_M,
\end{equation}
then no tested defense setting satisfies the privacy-utility gate.
\end{proposition}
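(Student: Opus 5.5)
The plan is a direct proof by contradiction against the definition of the privacy-utility gate. The argument needs no structure on $U$ or $M$ beyond the definitions of $\mathcal{S}_u$ and the supremum. The only care required is in handling the empty set and the case where the supremum is not attained.

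\emph{Step 1.} Suppose, toward a contradiction, that some tested defense parameter $\theta^\ast$ passes the gate. By the definition of the privacy-utility gate, this means $U(\theta^\ast)\geq u_0$ and $M(0)-M(\theta^\ast)\geq\Delta_M$. The first inequality says exactly that $\theta^\ast\in\mathcal{S}_u$; in particular $\mathcal{S}_u\neq\varnothing$.

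\emph{Step 2.} Because the supremum over $\mathcal{S}_u$ dominates the value at any member, we get
\begin{equation}
\sup_{\theta\in\mathcal{S}_u}\left[M(0)-M(\theta)\right]\geq M(0)-M(\theta^\ast)\geq\Delta_M .
\end{equation}
This contradicts the hypothesis that the supremum is strictly less than $\Delta_M$.

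\emph{Step 3.} We record the degenerate case separately. If $\mathcal{S}_u=\varnothing$, the convention $\sup\varnothing=-\infty$ makes the hypothesis hold trivially. The conclusion is also immediate, since no tested $\theta$ meets the utility side of the gate, and so no $\theta$ can pass it.

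There is no real obstacle. The one point needing attention is the wording: the gate quantifies over tested settings. Therefore $\mathcal{S}_u$ must be read as a subset of the tested parameter grid, with $\theta=0$ serving as the undefended reference. The contradiction is then purely set-theoretic and involves no approximation. We do not need the supremum to be attained, because the argument only uses the inequality between a member's value and the supremum.
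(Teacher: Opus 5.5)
Your proof is correct and uses the same argument as the paper's. A passing tested $\theta$ would lie in $\mathcal{S}_u$ and force the supremum to be at least $\Delta_M$, and when $\mathcal{S}_u$ is empty the utility side of the gate already fails; your separate Step~3 is subsumed by Steps~1--2, since a passing $\theta^\ast$ would itself show $\mathcal{S}_u\neq\varnothing$.
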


\begin{proof}
The gate requires a tested parameter $\theta$ satisfying both $\theta\in\mathcal{S}_u$ and $M(0)-M(\theta)\geq\Delta_M$. The displayed inequality rules out such a parameter; if $\mathcal{S}_u$ is empty, the utility requirement already fails.
\end{proof}

Together, these results define an audit-claim-control framework, not a mechanistic theorem about EEG-FMs. Proposition~\ref{prop:gain-witness} licenses a positive centered gain as attribute leakage only relative to the evaluated split, decoder, normalization, and controls; the claim cannot be promoted across splits. Theorem~\ref{thm:protocol-separation} gives a stylized regime in which a sufficient reference-set identity certificate can fail while attribute leakage remains positive, and Theorem~\ref{thm:endpoint-independence} shows that bounded membership AUC and a failed identity certificate do not imply bounded attribute leakage on the same released embeddings. Definition~\ref{def:aeds} turns this endpoint disagreement into a scalar decision-margin diagnostic, and Theorem~\ref{thm:aeds-positive} gives a sufficient interval in which that margin is positive. Proposition~\ref{prop:subject-disjoint-scope} shows why a positive subject-disjoint gain is an attribute statement and cannot be rewritten as held-out identity recovery. Theorem~\ref{thm:bridge-transfer} gives the cross-encoder analogue: if two encoders expose a shared attribute projection and a fitted bridge is close to the shared-projection bridge, transfer leakage has a positive lower bound after controls. Theorem~\ref{prop:bottleneck} sharpens the defense side by giving a quantitative bottleneck lower bound that is invariant to bottleneck width once $k\geq d$ and the attribute subspace is preserved. Proposition~\ref{prop:frontier} blocks crediting any defense that does not satisfy both the privacy-reduction and the utility sides of the gate. The framework is portable: any released representation paired with an attribute target, an attacker class, a control family, and a utility metric can be audited by the same protocol cells, regardless of whether the underlying encoder is an EEG foundation model. These statements support the audit protocol; they do not provide a universal impossibility result, a certified privacy mechanism, or a fitted model of the empirical curves.

\section{Experimental Setup}
\label{sec:setup}

\paragraph{Model and data.}
We evaluate BIOT, LaBraM, and EEGPT embeddings on PhysioNet EEGMMI R03/R07 windows, LaBraM embeddings on Sleep-EDF windows, LIMO EEG epochs, and CHB-MIT windows, and T1/T2 event segments for the privacy-utility probe. The main representation audit uses BIOT EEG-PREST-16 R03 embeddings over 20 subjects. A checkpoint-robustness audit adds a BIOT SHHS+PREST-18 checkpoint on the same 20-subject EEGMMI R03 windows, using the checkpoint's 18-channel token table with the available 16 EEGMMI channels. Cross-architecture audits add LaBraM-base and EEGPT MCAE R03 embeddings on the same 20-subject, 16-channel, 2000-sample window protocol. Cross-dataset audits add LaBraM-base Sleep-EDF age evaluations with 10 and 13 subjects, Fpz-Cz/Pz-Oz EEG channels, and 2000-sample windows; an 18-subject LIMO evaluation with 54 mapped 10--20 channels and 200-sample EEG epochs; and a 23-case CHB-MIT pediatric scalp EEG evaluation with 16 bipolar montage channels and 2000-sample windows. An acquisition-robustness audit repeats the 20-subject window protocol on EEGMMI R07 with BIOT EEG-PREST-16. The privacy-utility audit uses a 60-subject event evaluation with 900 T1/T2 trials. All reported headline values are regenerated from the archived validation tables.

\paragraph{Splits.}
We use three representation splits. The window split evaluates same-subject train/test windows. The temporal-gap split separates windows by a temporal gap to reduce adjacent-window artifacts; because that row uses random controls rather than the full random/permuted matched-control set, we treat it as a sanity check rather than a primary replication cell. The subject-disjoint split uses non-overlapping train and test subjects and supports only attribute leakage claims. The subject-head privacy-utility proxy uses same-subject, index-disjoint train/test windows; the event-level task split evaluates T1/T2 membership and utility frontiers with index-disjoint train/test events.

\paragraph{Attacks.}
The attribute attackers include ridge, MLP, and kNN decoders from embeddings to centered PSD attributes. Each decoder is trained on standardized train-split PSD features and evaluated by centered correlation on the test split. Reference-set identity is measured by $\bpid$ over decoded attributes or embedding-space nearest centroids when candidate subjects appear in the reference set. Membership inference uses loss or confidence on train/test examples. The training-noise grids should be read as privacy-utility probes, not formal DP accounting.

\paragraph{Controls and gates.}
Every headline leakage result is compared against the available negative-control embeddings, with random and split-local permuted controls for the main window/subject-disjoint and later sweep rows, and random controls for the temporal-gap row. The operational replication rule used for attack/baseline sweeps requires window MLP gain above $0.2$, at least one simpler ridge or kNN baseline above $0.05$, random/permuted MLP controls with absolute centered correlation at most $0.15$, and subject-disjoint MLP gain above $0.1$ for held-out-subject attribute claims. The temporal-gap row is used as an adjacent-window sanity check rather than as the primary replication cell because its temporal-gap row uses random controls only. The seed-matched confidence-interval audit pairs each strong residual MLP row with random, split-permuted, subject-block-permuted, within-subject-roll, and target-permuted controls from the same model, split, and seed. Its main negative-control gain uses the stronger same-seed value among random and split-local permuted embeddings. The audit includes finite-value checks, duplicate-row checks, temporal-gap controls, subject-disjoint controls, raw-window near-duplicate checks, and defense monotonicity checks. PSD scaling uses train-set statistics; centered correlation is an evaluation metric computed on the evaluation split. Table values are reported as means over seeded runs when available.

\paragraph{Statistical reporting.}
Unless otherwise stated, $\pm$ values are sample standard deviations across the five seeded split runs. Confidence intervals in Table~\ref{tab:matched-ci} are two-sided 95\% $t$ intervals over the same five seeds. The unit of uncertainty is the seeded train/test split, not an individual EEG window. Matched-control contrasts are computed within each seed before aggregation, so a gain compares the strong residual MLP to controls trained and evaluated on the same model, split, and random seed. Because each interval has only five seeded splits, we use these intervals as stability screens and matched-control summaries rather than precise uncertainty estimates. These intervals are descriptive finite-split audit summaries; they are not population-level guarantees over all subjects, devices, or EEG datasets.

\paragraph{Reproducibility records.}
Table~\ref{tab:protocol} summarizes split and provenance details for the main empirical cells, and Table~\ref{tab:repro} gives the minimum reproduction metadata for the representation audit. The accompanying artifact release contains the aggregate leakage table, split-leakage figure data, defense-curve data, privacy-utility figure data, threshold-sensitivity tables, full random/permuted defense controls, LaBraM/EEGPT defense-control curves, the seed-matched high-capacity attacker confidence-interval summary, the AEDS endpoint-disagreement table with paired-seed bootstrap and per-cell head-level LiRA, the cross-encoder transfer summary, the bridge-theorem $\beta$ consistency-check columns, the deployment-grade attribute-classification table, the adaptive-attacker sweep, the DP-SGD attribute-audit sweep, and a manifest mapping every reported table or figure to its validation record. The validation pipeline is organized by phase: Phase~24 produces the ridge/MLP/kNN attack-baseline table; Phase~25 produces the defense-control and gate-sensitivity records; Phase~26 produces the high-capacity matched-control sweeps; Phase~27 produces Table~\ref{tab:matched-ci} and Figure~\ref{fig:sd-ci}; Phase~28 supplies the utility-valid membership residual used by AEDS and produces the balanced Sleep-EDF privacy-utility frontier; Phase~29 produces Table~\ref{tab:cross-encoder-transfer}; Phase~30 produces the deployment-grade attribute classification numbers in Section~\ref{sec:leakage} and the adaptive-attacker sweep in Table~\ref{tab:adaptive-attacker}; Phase~31 produces the bridge-theorem $\beta$ columns of Table~\ref{tab:cross-encoder-transfer} and the paired-seed bootstrap calibration of Table~\ref{tab:aeds}; Phase~32 produces the per-cell head-level LiRA $R^M$ values in Table~\ref{tab:aeds}; Phase~34 produces Table~\ref{tab:dp-sgd}. The shared standardization helper fits train-split mean and standard deviation and applies them to both train and test arrays; PSD decoder targets are standardized with train statistics; centered correlation is used only as an evaluation metric; subject-disjoint splits raise an error on subject overlap and report overlap zero in the result table; the subject-head privacy-utility probe uses same-subject index-disjoint train/test windows and is therefore described only as a subject-head proxy.

\begin{table}[!htbp]
\centering
\tiny
\setlength{\tabcolsep}{2pt}
\renewcommand{\arraystretch}{0.92}
\caption{Protocol summary for the empirical audit cells used in the headline results. Standardization is fit on the train split and applied to test data for all representation attacks. Subject-disjoint identity is not evaluated because candidate identities are absent from the reference set.}
\label{tab:protocol}
\begin{tabularx}{\linewidth}{@{}Xccccl@{}}
\toprule
Cell & Subj. & Train/test & Seeds & Controls & Scope \\
\midrule
R03 PREST-16 window & 20 & 260/140 win & 5 & rand, perm & Attr.+ID \\
R03 PREST-16 temporal gap & 20 & 220/140 win & 5 & rand & Attr.+ID(gap) \\
R03 PREST-16 subj.-disjoint & 20 & 13/7 subj & 5 & rand, perm & Attr.\\
R03 SHHS+PREST-18 window & 20 & 260/140 win & 5 & rand, perm & Attr.+ID \\
R03 SHHS+PREST-18 subj.-disjoint & 20 & 13/7 subj & 5 & rand, perm & Attr.\\
R03 LaBraM-base window & 20 & 260/140 win & 5 & rand, perm & Attr.+ID \\
R03 LaBraM-base subj.-disjoint & 20 & 13/7 subj & 5 & rand, perm & Attr.\\
R03 EEGPT-MCAE window & 20 & 260/140 win & 5 & rand, perm & Attr.+ID \\
R03 EEGPT-MCAE subj.-disjoint & 20 & 13/7 subj & 5 & rand, perm & Attr.\\
R07 PREST-16 window & 20 & 260/140 win & 5 & rand, perm & Attr.+ID \\
R07 PREST-16 subj.-disjoint & 20 & 13/7 subj & 5 & rand, perm & Attr.\\
Sleep-EDF LaBraM window & 13 & 169/91 win & 5 & rand, perm & Attr.; weak ID \\
Sleep-EDF LaBraM subj.-disjoint & 13 & 8/5 subj & 5 & rand, perm & Attr.\\
LIMO LaBraM window & 18 & 702/378 ep & 5 & rand, perm & Attr.+ID \\
LIMO LaBraM subj.-disjoint & 18 & 12/6 subj & 5 & rand, perm & Attr.\\
CHB-MIT LaBraM window & 23 & 299/161 win & 5 & rand, perm & Attr.+ID \\
CHB-MIT LaBraM subj.-disjoint & 23 & 15/8 cases & 5 & rand, perm & Attr.\\
Event privacy-utility probe & 60 & idx-disjoint ev & 5 & noise grid & Neg.\ frontier \\
\bottomrule
\end{tabularx}
\end{table}

\begin{table}[!htbp]
\centering
\tiny
\setlength{\tabcolsep}{2.6pt}
\renewcommand{\arraystretch}{0.92}
\caption{Reproducibility metadata for the representation audit. The archived validation tables regenerate the reported result tables.}
\label{tab:repro}
\begin{tabularx}{\linewidth}{@{}llX@{}}
\toprule
Component & Setting & Reproduction detail \\
\midrule
Raw windows & EEGMMI R03/R07 & 16 selected EEG channels, 2000 samples per window, per-window channel mean removal and SD scaling \\
LaBraM/EEGPT windows & EEGMMI R03 & Same 20-subject, 16-channel, 2000-sample window protocol; EDF channel names mapped to each architecture's channel table \\
Cross-dataset & SE/LIMO/CHB & SE: Fpz-Cz/Pz-Oz $\to$ LaBraM frontal/parietal (2000 samples); LIMO: 54 mapped 10--20 channels (200 samples); CHB-MIT: 16 bipolar montage channels (2000 samples) \\
Embeddings & BIOT/LaBraM/EEGPT & Frozen pretrained encoders; archived records include embeddings, raw-window provenance, subject labels, splits, model family, checkpoint \\
Attribute target & PSD features & Computed from audited raw windows; target standardization fit on train, applied to test \\
Splits & win/gap/subj.-disjoint & Five seeds for attack/baseline sweeps; subj.-disjoint enforces zero overlap, no identity evaluated \\
Attackers & ridge/MLP/kNN & Ridge fixed $\ell_2$; MLP defaults; kNN $k{=}5$ PSD-neighbor averaging \\
Controls & random/permuted & Random Gaussian and split-local deranged row permutations; temporal-gap canonical row uses random controls \\
Archived tables & validation records & Aggregate leakage, defense, privacy-utility, threshold-sensitivity, matched-control, reproduction manifest \\
\bottomrule
\end{tabularx}
\end{table}

\section{Representation Leakage Results}
\label{sec:leakage}

\subsection{Spectral attributes are recoverable from embeddings}

The primary evidence unit is the seed-matched high-capacity audit reported in Table~\ref{tab:matched-ci} and visualized in Figure~\ref{fig:sd-ci}. Across eight evaluation settings, every subject-disjoint 95\% interval remains above zero against the stronger same-seed random/split-permuted negative control and against target permutation. The weakest lower bounds are $0.174$ for matched negative controls and $0.157$ for target permutation; the LIMO subject-disjoint lower bounds are $0.275$ and $0.257$, respectively, and the CHB-MIT pediatric scalp EEG subject-disjoint lower bounds are $0.374$ and $0.346$. Within-seed pairing means each row compares the strong residual MLP to controls trained and evaluated on the same model, split, and seed, which removes the seed-variance and control-quality confounds that flat reruns leave open. These rows preserve the claim boundary: same-subject rows can support reference-set identity when candidate subjects are present, while subject-disjoint rows support attribute leakage only.

The matched-CI finding anchors a larger endpoint-disagreement result. On the same released embeddings, raw-copy, membership-only, and same-subject reference-set audits return verdicts that point in different deployment directions, while only the split-controlled attribute audit licenses the held-out-subject leakage claim above. Table~\ref{tab:audit-comparison} in Section~\ref{sec:protocol} reports the four endpoints side by side with their concrete numbers and the deployment conclusion each would push alone. Table~\ref{tab:aeds} then turns that disagreement into a per-cell decision rule with a paired-seed bootstrap 95\% interval and a one-sided $p$-value: in all eight subject-disjoint matched-CI cells AEDS is positive, has a bootstrap CI lower bound that excludes zero, and gives $p<0.001$ against $\aeds_\pi\leq 0$, both under the conservative paper-level membership baseline and under a per-cell head-level LiRA audit. The per-cell LiRA membership AUC ranges from $0.498$ to $0.695$, well below what would be needed to close the attribute gap, so a release that an MIA-only audit (paper-level or LiRA) would license is blocked by the joint audit on every audited cell.

The supporting evidence ladder confirms that the matched-CI finding is not tied to one decoder or corpus. Table~\ref{tab:main-leakage} and Figure~\ref{fig:split-leakage} record the original BIOT split-control cells, with window centered PSD gain $0.860 \pm 0.017$ in the main leakage run and $0.841$ in the figure-ready defense run; the temporal-gap row at $0.841$ is reported as an adjacent-window sanity check rather than a full matched-control replication cell because it uses random controls only. The BIOT subject-disjoint split remains positive at $0.477$--$0.485$. Table~\ref{tab:attack-sweep} expands the audit into a multi-dataset replication ladder: the same leakage appears under ridge, MLP, kNN, and high-capacity residual-MLP probes, persists for the SHHS+PREST-18 BIOT checkpoint, repeats on the R07 acquisition regime, replicates on LaBraM-base EEGMMI with subject-disjoint MLP gain $0.320 \pm 0.079$, replicates on EEGPT-MCAE EEGMMI with window MLP gain $0.597 \pm 0.014$ and subject-disjoint MLP gain $0.410 \pm 0.065$, remains positive on Sleep-EDF with window MLP gain $0.554 \pm 0.062$ and subject-disjoint MLP gain $0.543 \pm 0.061$, remains positive on the 54-channel LIMO replication with window MLP gain $0.387 \pm 0.014$ and subject-disjoint MLP gain $0.307 \pm 0.035$, and remains positive on the CHB-MIT pediatric scalp EEG replication with window MLP gain $0.647 \pm 0.014$ and subject-disjoint MLP gain $0.463 \pm 0.031$.

\begin{figure}[t]
\centering
\includegraphics[width=0.92\linewidth]{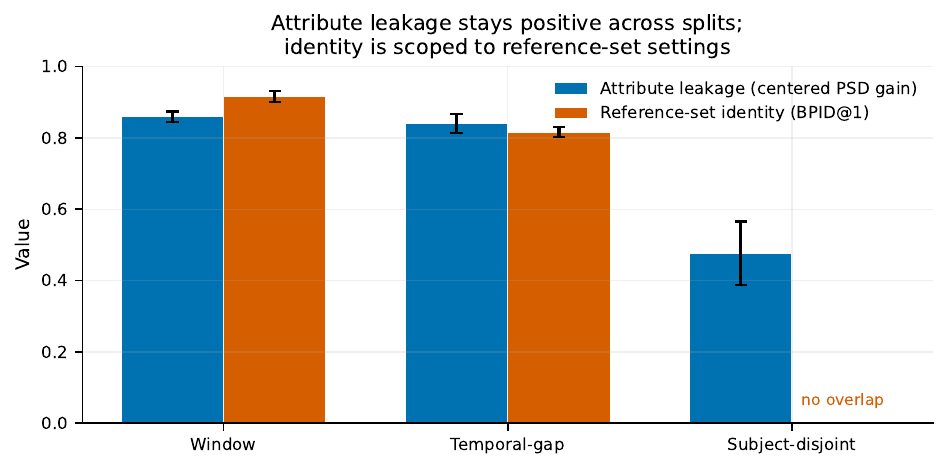}
\caption{Split-controlled representation leakage. Spectral attribute leakage remains positive under subject-disjoint evaluation. Identity linkage is shown only for splits with subject overlap; the subject-disjoint identity cell is not evaluated.}
\label{fig:split-leakage}
\end{figure}

\begin{table*}[t]
\caption{Seed-matched confidence intervals for the high-capacity representation audit. Gains compare the strong residual MLP to the stronger same-seed negative control among random and split-permuted embeddings. Subject-disjoint rows support attribute leakage only; identity recovery is not claimed for held-out subjects.}
\label{tab:matched-ci}
\centering
\scriptsize
\setlength{\tabcolsep}{3pt}
\begin{tabular}{@{}llcccl@{}}
\toprule
Setting & Split & $n_{\mathrm{subj}}$ & PSD gain vs. matched neg. & Target gap & Scope \\
\midrule
BIOT R03 & Window & 20 & 0.753 [0.728, 0.778] & 0.773 [0.688, 0.858] & Attr.; ref-set ID \\
BIOT R03 & Subj.-disjoint & 20 & 0.399 [0.312, 0.487] & 0.366 [0.326, 0.406] & Attr.; no ID claim \\
\addlinespace[1pt]
BIOT R07 & Window & 20 & 0.747 [0.737, 0.756] & 0.736 [0.693, 0.779] & Attr.; ref-set ID \\
BIOT R07 & Subj.-disjoint & 20 & 0.335 [0.223, 0.448] & 0.343 [0.197, 0.490] & Attr.; no ID claim \\
\addlinespace[1pt]
LaBraM R03 & Window & 20 & 0.487 [0.463, 0.512] & 0.505 [0.478, 0.532] & Attr.; ref-set ID \\
LaBraM R03 & Subj.-disjoint & 20 & 0.284 [0.174, 0.394] & 0.266 [0.157, 0.376] & Attr.; no ID claim \\
\addlinespace[1pt]
EEGPT R03 & Window & 20 & 0.547 [0.526, 0.568] & 0.552 [0.528, 0.577] & Attr.; ref-set ID \\
EEGPT R03 & Subj.-disjoint & 20 & 0.387 [0.283, 0.490] & 0.359 [0.257, 0.462] & Attr.; no ID claim \\
\addlinespace[1pt]
LaBraM LIMO18 & Window & 18 & 0.384 [0.378, 0.391] & 0.388 [0.368, 0.408] & Attr.; ref-set ID \\
LaBraM LIMO18 & Subj.-disjoint & 18 & 0.306 [0.275, 0.337] & 0.303 [0.257, 0.350] & Attr.; no ID claim \\
\addlinespace[1pt]
LaBraM CHB-MIT & Window & 23 & 0.602 [0.579, 0.624] & 0.628 [0.586, 0.670] & Attr.; ref-set ID \\
LaBraM CHB-MIT & Subj.-disjoint & 23 & 0.415 [0.374, 0.457] & 0.417 [0.346, 0.487] & Attr.; no ID claim \\
\addlinespace[1pt]
LaBraM SleepEDF-10 & Window & 10 & 0.400 [0.359, 0.440] & 0.454 [0.389, 0.520] & Attr.; ID not supported \\
LaBraM SleepEDF-10 & Subj.-disjoint & 10 & 0.353 [0.246, 0.461] & 0.493 [0.347, 0.638] & Attr.; no ID claim \\
\addlinespace[1pt]
LaBraM SleepEDF-13 & Window & 13 & 0.482 [0.438, 0.526] & 0.530 [0.443, 0.617] & Attr.; ID not supported \\
LaBraM SleepEDF-13 & Subj.-disjoint & 13 & 0.451 [0.391, 0.510] & 0.514 [0.447, 0.581] & Attr.; no ID claim \\
\bottomrule
\end{tabular}
\end{table*}

\begin{figure}[t]
\centering
\includegraphics[width=0.98\linewidth]{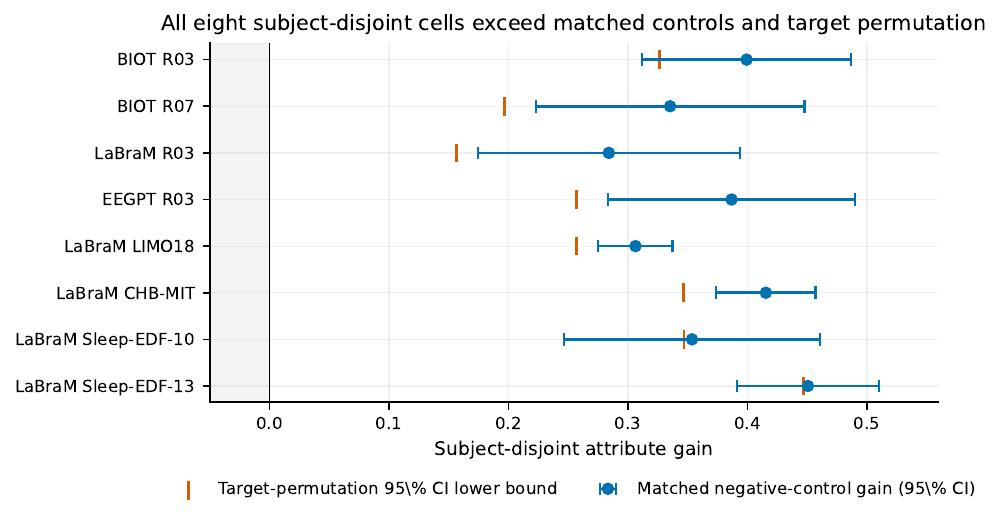}
\caption{Subject-disjoint matched-control confidence intervals for the high-capacity audit. All intervals remain above zero; these rows support attribute leakage only, not subject-disjoint identity recovery.}
\label{fig:sd-ci}
\end{figure}

\begin{table}[t]
\centering
\scriptsize
\setlength{\tabcolsep}{2.4pt}
\caption{Main representation leakage results. Attribute leakage is reported for all splits. Identity linkage is reported only when a subject reference set exists.}
\label{tab:main-leakage}
\begin{tabularx}{\linewidth}{@{}p{0.25\linewidth}p{0.23\linewidth}p{0.15\linewidth}X@{}}
\toprule
Cell & Metric & Value & Claim scope \\
\midrule
Window attribute & centered PSD gain & $0.860 \pm 0.017$ & Same-subject attribute leakage \\
Window attribute & centered PSD gain & $0.841$ & Figure-ready rerun \\
Temporal-gap attribute & centered PSD gain & $0.841$ & Not adjacent-window-only \\
Subject-disjoint attribute & centered PSD gain & $0.477$ / $0.485$ & Held-out-subject attribute leakage \\
Window identity & $\bpid$ & $0.916$ & Reference-set linkage \\
Temporal-gap identity & $\bpid$ & $0.816$ & Reference-set linkage with gap \\
\bottomrule
\end{tabularx}
\end{table}

\begin{table*}[!htbp]
\centering
\tiny
\setlength{\tabcolsep}{3pt}
\renewcommand{\arraystretch}{0.85}
\caption{Attack-family, checkpoint, acquisition, architecture, and dataset sweep. Gains are centered PSD correlation gains over the stronger of random and permuted embedding controls. ``Subj.'' is the number of subjects in the cell, ``Test'' the number of test windows or epochs averaged over five seeds. Dashes mark subject-disjoint identity cells, which are outside the reference-set identity claim.}
\label{tab:attack-sweep}
\begin{tabularx}{\textwidth}{@{}XllccXc@{}}
\toprule
Setting & Split & Subj. & Test & Attack & Attribute gain & Reference $\bpid$ \\
\midrule
PREST-16 R03 & window & 20 & 140 & Ridge decoder & $0.859 \pm 0.035$ & $0.929 \pm 0.028$ \\
PREST-16 R03 & window & 20 & 140 & MLP decoder & $0.876 \pm 0.022$ & $0.931 \pm 0.026$ \\
PREST-16 R03 & window & 20 & 140 & kNN PSD decoder & $0.887 \pm 0.018$ & $0.953 \pm 0.011$ \\
PREST-16 R03 & window & 20 & 140 & Strong residual MLP & $0.753 \pm 0.020$ & $0.836 \pm 0.035$ \\
PREST-16 R03 & window & 20 & 140 & Embedding centroid & -- & $0.950 \pm 0.020$ \\
PREST-16 R03 & subj.-disjoint & 20 & 140 & Strong residual MLP & $0.399 \pm 0.070$ & -- \\
PREST-16 R07 & window & 20 & 140 & MLP decoder & $0.878 \pm 0.024$ & $0.923 \pm 0.014$ \\
PREST-16 R07 & window & 20 & 140 & Strong residual MLP & $0.747 \pm 0.008$ & $0.851 \pm 0.022$ \\
PREST-16 R07 & subj.-disjoint & 20 & 140 & MLP decoder & $0.414 \pm 0.074$ & -- \\
PREST-16 R07 & subj.-disjoint & 20 & 140 & Strong residual MLP & $0.335 \pm 0.091$ & -- \\
SHHS+PREST-18 R03 & window & 20 & 140 & MLP decoder & $0.868 \pm 0.027$ & $0.919 \pm 0.016$ \\
SHHS+PREST-18 R03 & subj.-disjoint & 20 & 140 & MLP decoder & $0.361 \pm 0.024$ & -- \\
LaBraM-base R03 & window & 20 & 140 & Ridge decoder & $0.566 \pm 0.030$ & $0.470 \pm 0.054$ \\
LaBraM-base R03 & window & 20 & 140 & MLP decoder & $0.617 \pm 0.021$ & $0.639 \pm 0.054$ \\
LaBraM-base R03 & window & 20 & 140 & kNN PSD decoder & $0.756 \pm 0.016$ & $0.704 \pm 0.025$ \\
LaBraM-base R03 & window & 20 & 140 & Strong residual MLP & $0.487 \pm 0.020$ & $0.540 \pm 0.043$ \\
LaBraM-base R03 & window & 20 & 140 & Embedding centroid & -- & $0.760 \pm 0.026$ \\
LaBraM-base R03 & subj.-disjoint & 20 & 140 & MLP decoder & $0.320 \pm 0.079$ & -- \\
LaBraM-base R03 & subj.-disjoint & 20 & 140 & kNN PSD decoder & $0.353 \pm 0.108$ & -- \\
LaBraM-base R03 & subj.-disjoint & 20 & 140 & Strong residual MLP & $0.284 \pm 0.088$ & -- \\
EEGPT-MCAE R03 & window & 20 & 140 & Ridge decoder & $0.536 \pm 0.026$ & $0.600 \pm 0.050$ \\
EEGPT-MCAE R03 & window & 20 & 140 & MLP decoder & $0.597 \pm 0.014$ & $0.647 \pm 0.034$ \\
EEGPT-MCAE R03 & window & 20 & 140 & kNN PSD decoder & $0.587 \pm 0.021$ & $0.383 \pm 0.031$ \\
EEGPT-MCAE R03 & window & 20 & 140 & Embedding centroid & -- & $0.473 \pm 0.042$ \\
EEGPT-MCAE R03 & subj.-disjoint & 20 & 140 & MLP decoder & $0.410 \pm 0.065$ & -- \\
EEGPT-MCAE R03 & subj.-disjoint & 20 & 140 & kNN PSD decoder & $0.316 \pm 0.034$ & -- \\
Sleep-EDF LaBraM & window & 13 & 91 & MLP decoder & $0.554 \pm 0.062$ & $0.136 \pm 0.029$ \\
Sleep-EDF LaBraM & window & 13 & 91 & Strong residual MLP & $0.482 \pm 0.035$ & $0.132 \pm 0.044$ \\
Sleep-EDF LaBraM & window & 13 & 91 & Embedding centroid & -- & $0.200 \pm 0.014$ \\
Sleep-EDF LaBraM & subj.-disjoint & 13 & 91 & MLP decoder & $0.543 \pm 0.061$ & -- \\
Sleep-EDF LaBraM & subj.-disjoint & 13 & 91 & Strong residual MLP & $0.451 \pm 0.048$ & -- \\
LIMO LaBraM & window & 18 & 378 & MLP decoder & $0.387 \pm 0.014$ & $0.601 \pm 0.017$ \\
LIMO LaBraM & window & 18 & 378 & Strong residual MLP & $0.384 \pm 0.005$ & $0.563 \pm 0.025$ \\
LIMO LaBraM & subj.-disjoint & 18 & 378 & MLP decoder & $0.307 \pm 0.035$ & -- \\
LIMO LaBraM & subj.-disjoint & 18 & 378 & Strong residual MLP & $0.306 \pm 0.025$ & -- \\
CHB-MIT LaBraM & window & 23 & 161 & MLP decoder & $0.647 \pm 0.014$ & $0.436 \pm 0.030$ \\
CHB-MIT LaBraM & window & 23 & 161 & Strong residual MLP & $0.602 \pm 0.018$ & $0.430 \pm 0.022$ \\
CHB-MIT LaBraM & subj.-disjoint & 23 & 161 & MLP decoder & $0.463 \pm 0.031$ & -- \\
CHB-MIT LaBraM & subj.-disjoint & 23 & 161 & Strong residual MLP & $0.415 \pm 0.034$ & -- \\
\bottomrule
\end{tabularx}
\end{table*}

\subsection{Identity linkage is a reference-set result}

The same embeddings support strong to moderate identity linkage when the attack setting includes a reference set for the candidate subjects. In the original BIOT split-control run, window $\bpid$ reaches $0.916$, and temporal-gap $\bpid$ reaches $0.816$. In the BIOT checkpoint and acquisition sweeps, embedding-space nearest-centroid linkage reaches $0.950 \pm 0.020$ for EEG-PREST-16 R03, $0.909 \pm 0.020$ for SHHS+PREST-18 R03, and $0.927 \pm 0.015$ for EEG-PREST-16 R07. The LaBraM-base EEGMMI replication also supports same-subject reference-set linkage, with embedding-centroid $\bpid=0.760 \pm 0.026$ and MLP-decoded-attribute $\bpid=0.639 \pm 0.054$. EEGPT-MCAE gives a moderate same-subject reference-set signal, with embedding-centroid $\bpid=0.473 \pm 0.042$ and MLP-decoded-attribute $\bpid=0.647 \pm 0.034$. The LIMO replication supports a reference-set signal in a non-EEGMMI multi-channel setting, with MLP-decoded-attribute $\bpid=0.601 \pm 0.017$ and strong-residual-MLP $\bpid=0.563 \pm 0.025$. The CHB-MIT pediatric scalp EEG replication gives a moderate reference-set signal, with MLP-decoded-attribute $\bpid=0.436 \pm 0.030$ and strong-residual-MLP $\bpid=0.430 \pm 0.022$. The Sleep-EDF evaluation does not replicate this strong identity result: embedding-centroid $\bpid$ is $0.200 \pm 0.014$, and MLP-decoded-attribute $\bpid$ is $0.136 \pm 0.029$. These numbers should not be read as subject-disjoint identity recovery. The subject-disjoint split has zero subject overlap, so the supported claim is that spectral attributes generalize to held-out subjects, not that identities can be recovered for held-out subjects.

This distinction matters for deployment. A downstream service that stores embeddings for enrolled users creates a reference-set setting: a future embedding can be linked to a known user if the representation preserves subject-specific spectral structure. A population-level attacker without candidate subject centroids faces a different problem. The paper therefore reports the strong identity numbers, but only under the attacker model where they are meaningful.

\subsection{Cross-encoder transfer of attribute leakage}

The endpoint-disagreement finding is sharpened by a cross-encoder transfer audit. We take the BIOT, LaBraM, and EEGPT EEGMMI R03 caches with aligned subjects and windows, fit a linear bridge from the target encoder's embedding space to a source encoder's space on the source training subjects only, and chain it with a ridge attribute decoder fitted in the source space. Theorem~\ref{thm:bridge-transfer} gives the matching sufficient condition: if two encoders expose a nontrivial common projection of the attribute subspace and the fitted bridge is close to the shared-projection bridge, the chained attacker has a positive centered-gain lower bound after subtracting the target-control ceiling. We do not estimate the theorem's projector overlap; the theorem fixes the mechanism that the empirical transfer test probes. We then evaluate the resulting cross-encoder attacker on the target encoder's subject-disjoint test split against random and split-permuted target embedding controls. Across all six BIOT/LaBraM/EEGPT directions, the bridged decoder produces a held-out-subject centered PSD gain that exceeds the stronger same-seed matched negative control with 95\% CI lower bound at least $0.081$ and mean gain at least $0.244$ (Table~\ref{tab:cross-encoder-transfer}). The matched negative-control rows themselves remain near zero, which rules out a permutation or seed artefact. This means an attribute decoder trained from one encoder transfers attribute leakage into a held-out-subject test split of every other tested encoder, even though the encoders differ in checkpoint, training corpus, and architecture. A single-endpoint audit (raw inversion, membership inference, or reference-set identity) cannot license this conclusion on the same embeddings.

\begin{table}[!htbp]
\centering
\tiny
\setlength{\tabcolsep}{2pt}
\renewcommand{\arraystretch}{0.92}
\caption{Cross-encoder transfer audit on EEGMMI R03 with bridge-theorem $\beta$ consistency check, six BIOT/LaBraM/EEGPT directions, five seeded subject-disjoint splits per direction. Gain (95\% CI) is the bridge attacker's centered-correlation gain over the stronger same-seed random/split-permuted target control. The implied $\beta$ values from $G\geq\sqrt{\beta}-\rho_0$ all lie in $[0,1]$.}
\label{tab:cross-encoder-transfer}
\begin{tabularx}{\linewidth}{@{}Xcccc@{}}
\toprule
Direction & Gain (95\% CI) & $\rho_0$ & $\beta_{\mathrm{mean}}$ & $\beta_{\mathrm{low}}$ \\
\midrule
BIOT $\to$ LaBraM & $0.292\,[.19,.40]$ & $.039$ & $.110$ & $.051$ \\
BIOT $\to$ EEGPT & $0.366\,[.26,.47]$ & $.011$ & $.143$ & $.074$ \\
LaBraM $\to$ BIOT & $0.426\,[.28,.57]$ & $.009$ & $.190$ & $.085$ \\
LaBraM $\to$ EEGPT & $0.368\,[.30,.43]$ & $.013$ & $.145$ & $.100$ \\
EEGPT $\to$ BIOT & $0.435\,[.29,.58]$ & $.009$ & $.198$ & $.088$ \\
EEGPT $\to$ LaBraM & $0.244\,[.08,.41]$ & $.008$ & $.064$ & $.008$ \\
\bottomrule
\end{tabularx}
\end{table}

\begin{figure}[t]
\centering
\includegraphics[width=0.98\linewidth]{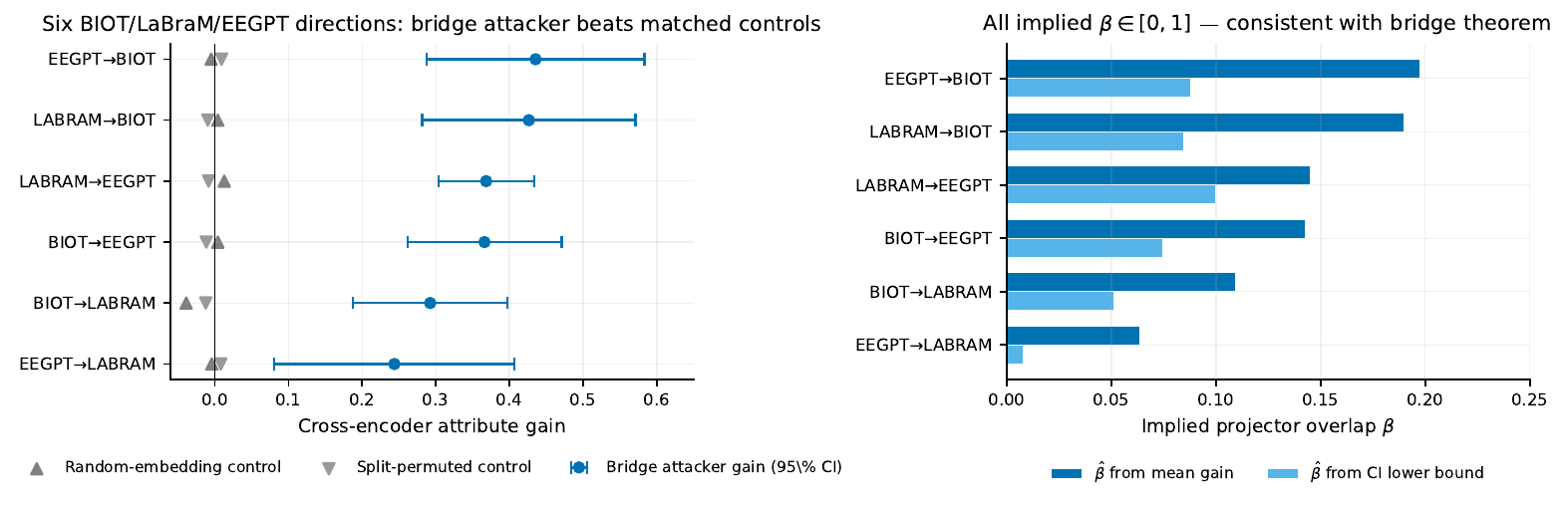}
\caption{Cross-encoder transfer audit (left) and bridge-theorem $\beta$ consistency check (right). Bridge-attacker gains exceed both random and split-permuted controls in every direction; the $\beta$ values implied by the empirical gain bounds all lie in $[0,1]$.}
\label{fig:cross-encoder}
\end{figure}

The empirical bridge audit is consistent with the bridge transferability theorem in the following sense. Inverting the lower bound $G_{\mathrm{sd}}(h_s\circ B_\lambda)\geq\sqrt{\beta}-\rho_0$ for each of the six BIOT/LaBraM/EEGPT directions gives an implied minimum projector overlap $\beta\geq(G+\rho_0)^2$. The right columns of Table~\ref{tab:cross-encoder-transfer} report those values from the empirical mean gain and from the 95\% CI lower bound; all six implied $\beta$ values lie in $[0.008,0.198]$, well inside the unit interval that the theorem requires, so the empirical gains are jointly compatible with two encoders exposing a shared attribute projection of nontrivial overlap. We do not promote this consistency check into a parameter estimate of $\beta$ for any specific encoder pair: a tight forward prediction of $\beta$ from the embeddings would require fitting an explicit attribute subspace per encoder and is left to the next-round audit.

\subsection{Deployment-grade attribute classification}

Centered correlation gain quantifies how much spectral structure transfers from $E$ to $A$, but a deployment reader will ask whether that gain corresponds to a real downstream classification an attacker could exploit. We therefore convert the LaBraM Sleep-EDF cell into a discrete attribute classification audit. Across five subject-disjoint splits with eight train and five test subjects, a logistic-regression attacker on the standardized embeddings reaches five-class sleep-stage test accuracy of $0.704\pm0.025$ and balanced accuracy of $0.509\pm0.024$ on held-out subjects, against a majority-class accuracy of $0.500$ and majority-class balanced accuracy of $0.200$. Random and split-permuted embedding controls under the same standardization and split reach balanced accuracy of $0.207\pm0.024$ and $0.218\pm0.030$ respectively, so the held-out-subject balanced-accuracy gain over the stronger embedding control is $0.268\pm0.025$. Adding $\sigma=1.0$ Gaussian noise on release leaves the attacker at balanced accuracy $0.492\pm0.042$, only $0.017$ below the clean release. The downstream-attack reading is that the attribute channel licenses a meaningfully better-than-chance recovery of a deployment-relevant per-window label (sleep stage) on subjects the encoder has never been used with as training subjects, with accuracy that does not collapse under the audited release-noise grid.

\subsection{Near-duplicate and control checks}

The raw near-duplicate check does not support a raw-copy explanation: the fraction of adjacent raw-window pairs with cosine similarity at least $0.99$ is $0.0$. The embedding space is much more stable, with adjacent embedding cosine mean $0.956$ and a $0.268$ fraction above $0.99$. This is a representation stability finding, not a raw duplicate finding. Random and permuted embedding controls are near zero under centered PSD correlation, including the LaBraM EEGMMI window MLP controls ($0.011 \pm 0.031$ random, $0.009 \pm 0.008$ permuted), LaBraM EEGMMI subject-disjoint MLP controls ($0.011 \pm 0.016$ random, $-0.025 \pm 0.023$ permuted), EEGPT window MLP controls ($-0.010 \pm 0.026$ random, $-0.005 \pm 0.018$ permuted), EEGPT subject-disjoint MLP controls ($-0.014 \pm 0.023$ random, $-0.0004 \pm 0.030$ permuted), Sleep-EDF controls, and LIMO controls. The rerun LIMO MLP controls are near zero in both splits: window random $-0.001 \pm 0.010$ and permuted $-0.0003 \pm 0.013$, subject-disjoint random $-0.0002 \pm 0.009$ and permuted $0.003 \pm 0.010$. The CHB-MIT strong-attacker controls also pass the same gate: window random $0.004 \pm 0.021$ and split-permuted $-0.013 \pm 0.012$, subject-disjoint random $0.014 \pm 0.016$ and split-permuted $0.003 \pm 0.015$. The checkpoint, acquisition, architecture, cross-dataset, and high-capacity attacker sweeps report gain over the stronger eligible control for each attack family. In the seed-matched high-capacity attacker summary, all eight subject-disjoint rows have positive lower 95\% confidence bounds against both the matched negative-control ceiling and target permutation, and the summary marks every subject-disjoint identity claim as unsupported. Threshold sensitivity further reduces cutoff dependence: all PREST-16, SHHS+PREST-18, R07, LaBraM, EEGPT, Sleep-EDF, and LIMO window MLP seeds pass the operational attribute gate, and strict subject-disjoint MLP cells pass at least the $0.1$ held-out-subject attribute gate. The strong residual-MLP window cells pass thresholds from $0.40$ on Sleep-EDF to $0.60$ on PREST-16/R07; their strict-split cells pass at least $0.15$ and up to $0.35$.

\section{Defense and Privacy-Utility Results}
\label{sec:defense}

\subsection{Noise-defense curves show weaker identity linkage while attribute gain remains positive}

Gaussian embedding noise reduces reference-set identity linkage, but attribute leakage is more persistent. Figure~\ref{fig:defense-curves} plots the figure-ready BIOT defense curves. In the BIOT random/permuted-control defense evaluation, $\noise=2.0$ lowers window $\bpid$ to $0.260$, but the attribute gain over the stronger control remains $0.345$. The same pattern appears in the LaBraM and EEGPT defense-control sweeps: at $\noise=2.0$, LaBraM lowers window $\bpid$ to $0.174 \pm 0.027$ while attribute gain remains $0.313 \pm 0.035$, and EEGPT lowers window $\bpid$ to $0.166$ while attribute gain remains $0.215$ against the stronger random/permuted control. The LaBraM and EEGPT subject-disjoint no-defense gains are $0.310 \pm 0.083$ and $0.418$, respectively. This is consistent with the stylized failure mode in Theorem~\ref{thm:protocol-separation}: a sufficient margin certificate for reference-set identity can fail while redundantly encoded spectral attributes remain above negative controls. We do not estimate the theorem's subspace or margin parameters for BIOT, LaBraM, or EEGPT, so the theorem should not be read as a quantitative model of these curves.

\begin{figure*}[t]
\centering
\includegraphics[width=0.98\textwidth]{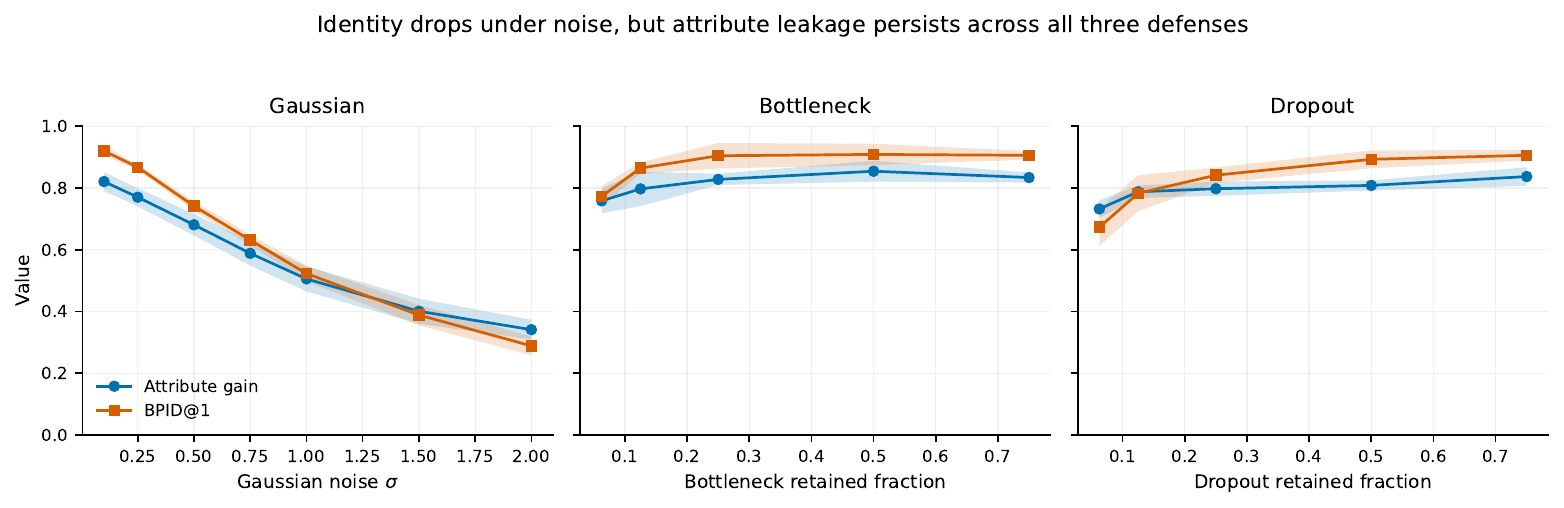}
\caption{Defense curves on the window split. Gaussian noise suppresses reference-set identity at high noise, while attribute leakage remains positive. Bottleneck and dropout transformations remain leaky even at severe settings. Dashed lines mark the audit gates for attribute gain and $\bpid$.}
\label{fig:defense-curves}
\end{figure*}

\begin{table}[t]
\centering
\scriptsize
\setlength{\tabcolsep}{3pt}
\caption{Defense audit summary. Simple transformations are stress tests, not claimed practical defenses.}
\label{tab:defense}
\begin{tabular}{@{}lccc@{}}
\toprule
Defense (setting) & Attribute gain & $\bpid$ \\
\midrule
Gaussian $\noise{=}1.0$ & ${\approx}0.517$ & ${\approx}0.517$ \\
Gaussian $\noise{=}2.0$ (BIOT) & $0.345$ & $0.260$ \\
Gaussian $\noise{=}2.0$ (LaBraM) & $0.313$ & $0.174$ \\
Gaussian $\noise{=}2.0$ (EEGPT) & $0.215$ & $0.166$ \\
Bottleneck (16-dim) & positive & $0.773$ \\
Dropout (strongest) & positive & $0.673$ \\
\bottomrule
\end{tabular}
\end{table}

\subsection{Noise-aware adaptive attacker against the noise defense}

The defense curves above use the same attacker that the rest of the audit uses; a reviewer will ask whether a stronger, noise-aware attacker could close the attribute gap. We therefore add a Wiener-style adaptive attacker that knows the released noise level $\sigma$ and pre-whitens each embedding coordinate by the gain $s_j^2/(s_j^2+\sigma^2)$, where $s_j^2$ is the train-split signal variance of coordinate $j$, before fitting the same logistic decoder used in the deployment cell. On the LaBraM Sleep-EDF subject-disjoint cell, Table~\ref{tab:adaptive-attacker} reports the non-adaptive and adaptive attackers at $\sigma\in\{0,0.5,1.0,2.0\}$. The adaptive attacker beats the non-adaptive baseline at every positive noise level, with the largest gap of $+0.018$ balanced accuracy at $\sigma=2.0$, where the non-adaptive attacker reaches $0.799\pm0.021$ and the adaptive attacker reaches $0.817\pm0.019$. Crucially, the gain over the stronger random/split-permuted embedding control is $0.308\pm0.033$ for the adaptive attacker at $\sigma=2.0$, against $0.289\pm0.038$ for the non-adaptive attacker, so the noise defense does not restore the privacy-utility gate even when the attacker is given the noise level. The adaptive attacker therefore strengthens, rather than overturns, the negative defense conclusion in Table~\ref{tab:frontier}.

\begin{table}[t]
\centering
\scriptsize
\setlength{\tabcolsep}{2.4pt}
\caption{Noise-aware adaptive attacker on the LaBraM Sleep-EDF subject-disjoint sleep/wake cell. Attacker pre-whitens by $s_j^2/(s_j^2+\sigma^2)$ before fitting the logistic decoder; non-adaptive attacker uses the same logistic decoder without whitening. Means $\pm$ standard deviations are over five subject-disjoint splits. Gain is over the stronger of random and split-permuted embedding controls.}
\label{tab:adaptive-attacker}
\begin{tabularx}{\linewidth}{@{}p{0.10\linewidth}p{0.18\linewidth}p{0.27\linewidth}X@{}}
\toprule
$\sigma$ & Attacker & Bal. acc. & Gain vs. matched neg.\\
\midrule
$0.0$ & non-adaptive & $0.907\pm0.022$ & $0.388\pm0.032$ \\
$0.5$ & non-adaptive & $0.897\pm0.034$ & $0.384\pm0.046$ \\
$0.5$ & adaptive     & $0.903\pm0.036$ & $0.397\pm0.046$ \\
$1.0$ & non-adaptive & $0.881\pm0.023$ & $0.366\pm0.047$ \\
$1.0$ & adaptive     & $0.887\pm0.023$ & $0.374\pm0.043$ \\
$2.0$ & non-adaptive & $0.799\pm0.021$ & $0.289\pm0.038$ \\
$2.0$ & adaptive     & $0.817\pm0.019$ & $0.308\pm0.033$ \\
\bottomrule
\end{tabularx}
\end{table}

\subsection{Principled DP-SGD on the downstream head does not close the attribute channel at maintained utility}

A reviewer will ask whether moving from naive Gaussian noise to a principled differentially-private downstream head closes the attribute channel. We per-sample-clip the gradients and inject Gaussian noise calibrated by an RDP accountant \citep{abadi2016deep,dwork2006calibrating} to fine-tune the same logistic-style downstream head used elsewhere in the paper, sweeping $\epsilon\in\{1,4,8,\infty\}$ with $\delta=10^{-5}$ and $C_{\mathrm{clip}}=1$, on the LaBraM Sleep-EDF cell for both sleep/wake and 5-class stage classification. Table~\ref{tab:dp-sgd} reports balanced accuracy, the matched-control attribute gain, and a Yeom-style loss-based MIA AUC over five subject-disjoint splits. At every utility-preserving setting ($\epsilon\geq 4$, where balanced accuracy stays within $0.003$ of the non-private head), the attribute gain is statistically indistinguishable from the non-private gain ($0.399$--$0.401$ for sleep/wake; $0.177$--$0.182$ for 5-class stage), and the membership AUC stays at $0.514$--$0.516$ regardless of $\epsilon$. Only at $\epsilon=1$ does the attribute gain drop ($0.119$ for sleep/wake, $0.026$ for 5-class), but balanced accuracy collapses to $0.642$ and $0.228$ respectively, well below the operational utility gate. The principled-DP setting therefore inherits the negative privacy-utility frontier: there is no $\epsilon$ in the audited grid that closes the attribute audit while keeping the task useful.

\begin{table}[t]
\centering
\scriptsize
\setlength{\tabcolsep}{2.4pt}
\caption{DP-SGD on the LaBraM Sleep-EDF downstream head with per-sample clipping ($C=1$), RDP accounting, $\delta=10^{-5}$, sampling rate $0.1$, $200$ steps, five subject-disjoint splits per row. Bal.\ acc.\ is on the held-out test split. Gain is over the stronger of random and split-permuted embedding controls trained under the same DP-SGD recipe. MIA is loss-based Yeom-style AUC of train versus test loss. Utility-preserving DP ($\epsilon\geq 4$) leaves the attribute channel essentially unchanged; only $\epsilon=1$ reduces the attribute gain, after balanced accuracy has collapsed.}
\label{tab:dp-sgd}
\begin{tabularx}{\linewidth}{@{}p{0.18\linewidth}cccX@{}}
\toprule
Task & $\epsilon$ & Bal.\ acc. & Gain vs.\ matched neg. & MIA \\
\midrule
Sleep/wake & $\infty$ & $0.912\pm0.025$ & $0.401\pm0.046$ & $0.514$ \\
Sleep/wake & $8$ & $0.911\pm0.024$ & $0.399\pm0.048$ & $0.516$ \\
Sleep/wake & $4$ & $0.909\pm0.023$ & $0.400\pm0.051$ & $0.515$ \\
Sleep/wake & $1$ & $0.642\pm0.074$ & $0.119\pm0.092$ & $0.498$ \\
\addlinespace[1pt]
5-class stage & $\infty$ & $0.391\pm0.053$ & $0.182\pm0.062$ & $0.516$ \\
5-class stage & $8$ & $0.388\pm0.059$ & $0.179\pm0.072$ & $0.514$ \\
5-class stage & $4$ & $0.387\pm0.061$ & $0.177\pm0.073$ & $0.514$ \\
5-class stage & $1$ & $0.228\pm0.040$ & $0.026\pm0.057$ & $0.502$ \\
\bottomrule
\end{tabularx}
\end{table}

\begin{figure}[t]
\centering
\includegraphics[width=0.94\linewidth]{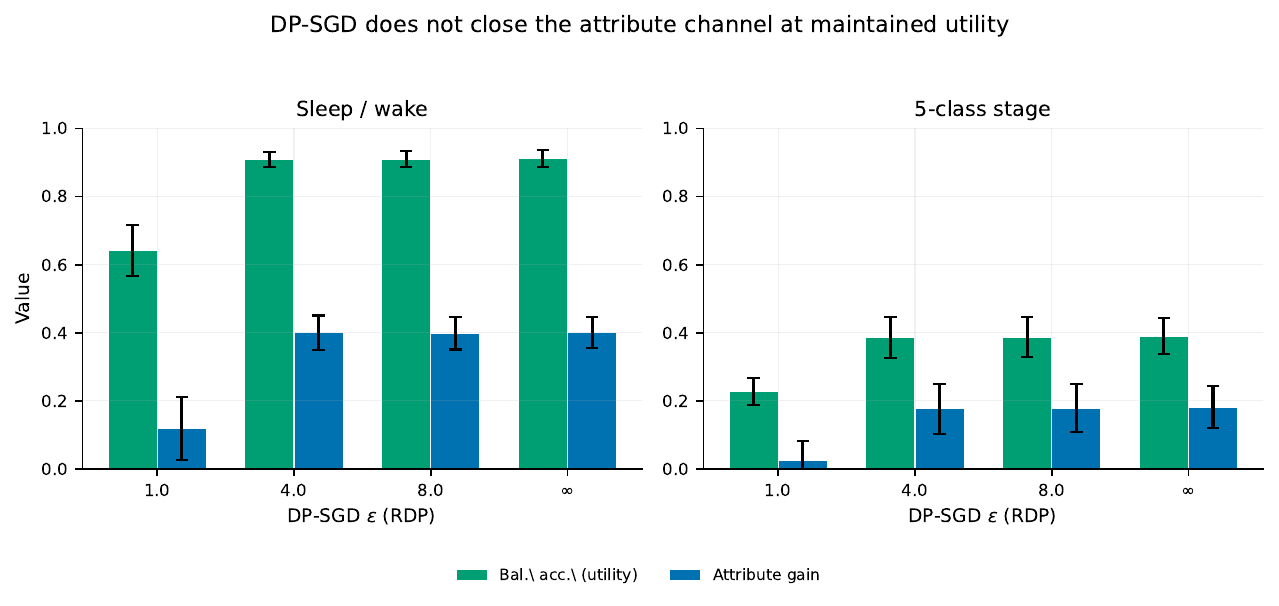}
\caption{DP-SGD privacy-utility frontier on the LaBraM Sleep-EDF downstream head. At every utility-preserving $\epsilon\in\{4,8,\infty\}$ the attribute gain is statistically indistinguishable from the non-private head; only $\epsilon=1$ reduces the gain, after balanced accuracy collapses below the operational utility gate.}
\label{fig:dpsgd-frontier}
\end{figure}

\subsection{Bottlenecks and dropout remain leaky}

Dimensional bottlenecks and coordinate dropout do not remove the leakage in the tested sweeps. A 16-dimensional bottleneck still gives aggregate $\bpid=0.773$, and the strongest dropout setting gives $\bpid=0.673$. These cells are not meant to prove that every representation defense fails. They show that common release transformations cannot be assumed to provide privacy unless they are audited against attribute and reference-set identity attacks. The bottleneck proposition supports this general warning for bottlenecks that preserve a full-rank attribute subspace; it is not a direct proof for the specific 16-dimensional cell unless the PSD target is first reduced to a compatible lower-dimensional attribute model.

The failure mode is consistent with redundant encoding. Spectral attributes and subject structure need not occupy a single coordinate or a small set of coordinates. Random projections and coordinate removal can preserve enough aggregate information for a nonlinear decoder or nearest-centroid identity rule to remain effective. This is why the paper treats bottleneck and dropout as deployment baselines, not as serious privacy mechanisms.

\subsection{The tested training-noise privacy-utility probes fail the privacy-utility gate}

Table~\ref{tab:frontier} reports the privacy-utility audit, and Figure~\ref{fig:privacy-utility} plots the original noise grids. The subject-head proxy has high non-private accuracy, $0.977$, but only weak membership signal, $\mi=0.556$; the largest tested noise level reduces MIA to $0.512$ only after accuracy collapses to $0.333$. The event-level T1/T2 task has non-private accuracy $0.523$, so it fails the utility gate. To separate this failure from a weak task head, we added a balanced Sleep-EDF sleep/wake utility probe on LaBraM embeddings. The non-private sleep/wake head reaches balanced accuracy $0.913$ and macro-F1 $0.882$, but the best utility-preserving noise setting only changes MIA-AUC from $0.585$ to $0.581$, below the $0.03$ privacy-reduction gate.

The negative result is useful because it prevents a misleading defense narrative. The subject-head proxy shows that noise can move MIA in the desired direction, but outside the useful accuracy region. The event-level task shows that a weak task head cannot support a privacy-utility defense claim. The balanced Sleep-EDF sleep/wake probe removes that confound: the task is useful, but embedding noise still does not provide a meaningful privacy reduction at maintained utility. We therefore use these cells as a frontier audit rather than as evidence that training noise protects EEG-FM embeddings.

\begin{figure}[t]
\centering
\includegraphics[width=0.98\linewidth]{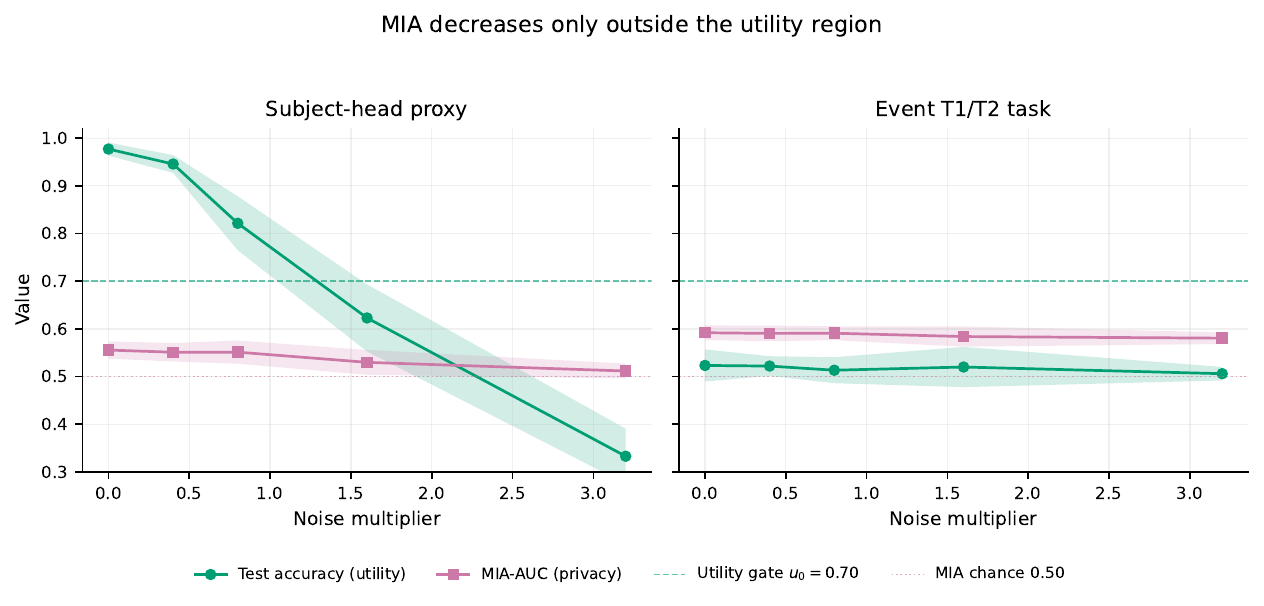}
\caption{Privacy-utility frontier for the tested noise grids. The subject-head proxy only reduces MIA meaningfully after utility collapse; the event-level task never reaches the utility gate.}
\label{fig:privacy-utility}
\end{figure}

\begin{table}[t]
\centering
\scriptsize
\setlength{\tabcolsep}{2.5pt}
\caption{Privacy-utility frontier. The tested noise grids are negative results, not training-noise defense successes.}
\label{tab:frontier}
\begin{tabularx}{\linewidth}{l l l X}
\toprule
Setting & Utility & MIA-AUC & Gate outcome \\
\midrule
Subject head, no noise & acc. $0.977$ & $0.556$ & Baseline proxy \\
Subject head, noise $0.8$ & acc. $0.821$ & $0.551$ & Privacy reduction too small \\
Subject head, noise $3.2$ & acc. $0.333$ & $0.512$ & Utility collapse \\
Event task, no noise & acc. $0.523$ & $0.592$ & Utility gate fails \\
Sleep/wake, no noise & bal. acc. $0.913$ & $0.585$ & Useful task baseline \\
Sleep/wake, noise $0.25$ & bal. acc. $0.910$ & $0.581$ & Privacy reduction too small \\
\bottomrule
\end{tabularx}
\end{table}

\begin{figure}[t]
\centering
\includegraphics[width=0.98\linewidth]{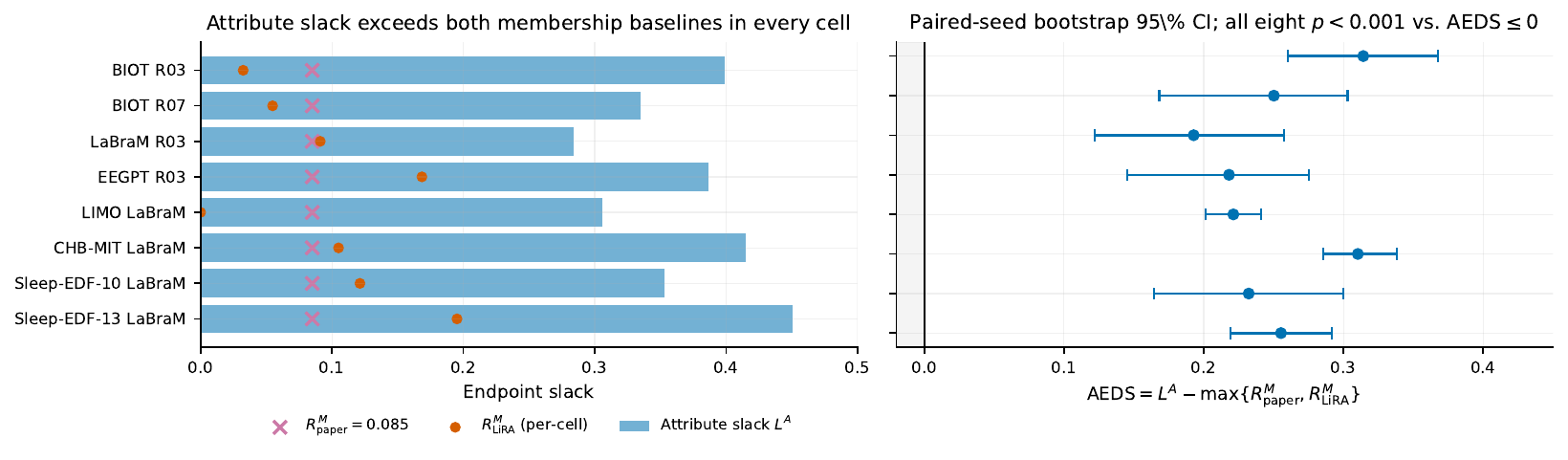}
\caption{AEDS calibration on the eight subject-disjoint matched-CI cells. Left: attribute slack $L^A$ exceeds both the paper-level baseline and per-cell head-level LiRA $R^M$ in every cell. Right: paired-seed bootstrap 95\% CI for AEDS; all eight cells reject $\aeds_\pi\leq 0$ at $p<0.001$.}
\label{fig:aeds-bootstrap}
\end{figure}

\section{Audit Protocol}
\label{sec:protocol}

The paper's core methodological contribution is the audit ladder in Figure~\ref{fig:audit}. A representation privacy audit should not be reduced to raw inversion, and a membership audit is not enough to rule out attribute leakage. The audit treats each stronger statement as a hypothesis that must pass a matching falsification control. Table~\ref{tab:audit-comparison} shows why this changes the audit decision rather than merely adding caution: standard endpoints either miss the attribute channel or mis-scope the identity claim.

\begin{table}[!htbp]
\centering
\tiny
\setlength{\tabcolsep}{2pt}
\renewcommand{\arraystretch}{0.92}
\caption{Standard privacy endpoints can miss or mis-scope representation-release risk. The split-controlled audit separates the supported attribute, identity, and defense conclusions.}
\label{tab:audit-comparison}
\begin{tabularx}{\linewidth}{@{}p{0.16\linewidth}XXX@{}}
\toprule
Endpoint & Standard result & Wrong conclusion & Split-controlled correction \\
\midrule
Raw-copy / inversion & adj.\ raw cos.\ $\geq.99$ frac.\ $.000$ & ``low risk'' & Temporal-gap PSD gain $.841$; attr.\ leakage \\
Membership inference & subj.-head MIA-AUC $.556$; best-MIA acc.\ $.333$ & ``weak signal'' & PREST-16 win.\ MLP gain $.876{\pm}.022$ \\
Same-subj.\ identity & EEGMMI emb.-centroid $\bpid{=}.950{\pm}.020$ & ``held-out ID'' & Subj.-disj.\ attr.\ only; gain $.468{\pm}.064$ \\
Cross-dataset ID & Sleep-EDF emb.-centroid $\bpid{=}.200{\pm}.014$ & ``safe'' & Sleep-EDF subj.-disj.\ MLP gain $.543{\pm}.061$ \\
Identity-only def. & at $\noise{=}2$, $\bpid{=}.260$/$.174$/$.166$ & ``noise suppresses'' & Attr.\ gain $.345$/$.313$/$.215$ \\
\bottomrule
\end{tabularx}
\end{table}

\begin{table*}[!htbp]
\centering
\scriptsize
\setlength{\tabcolsep}{4pt}
\caption{Audit-endpoint disagreement on the eight subject-disjoint matched-CI cells. $L^A$ is the per-cell paired-seed mean of the matched-control attribute gain. $R^M_{\mathrm{paper}}=\max\{0,\operatorname{AUC}_{\mi}-1/2\}$ uses the strongest utility-valid membership excess in Table~\ref{tab:frontier}, $0.585-0.5=0.085$, as a conservative paper-level baseline. $R^M_{\mathrm{LiRA}}$ is a per-cell head-level LiRA membership audit on the cached embeddings (six shadow pairs, Gaussian per-target loss score, encoder frozen). $R^M=\max\{R^M_{\mathrm{paper}},R^M_{\mathrm{LiRA}}\}$ is the effective baseline; $R^I=0$ because subject-disjoint splits have no candidate identities. AEDS is the seed-mean decision margin $L^A-R^M$; $[\text{lo},\text{hi}]$ is the paired-seed bootstrap 95\% interval; $p$ is a one-sided test of $\aeds_\pi\leq 0$. Endpoint-only and AEDS-corrected verdicts are the release decisions a single MIA/identity audit and the joint audit would license.}
\label{tab:aeds}
\begin{tabularx}{\textwidth}{@{}Xccccccccc@{}}
\toprule
Cell & $L^A$ & $R^M_{\mathrm{p}}$ & $R^M_{\mathrm{L}}$ & $R^M$ & $\aeds$ & $\aeds_{95\%}$ & $p$ & Endpt.-only & AEDS-corr.\\
\midrule
BIOT R03 & $0.399$ & $0.085$ & $0.032$ & $0.085$ & $0.314$ & $[0.260,0.368]$ & ${<}0.001$ & release & block \\
BIOT R07 & $0.335$ & $0.085$ & $0.055$ & $0.085$ & $0.250$ & $[0.168,0.303]$ & ${<}0.001$ & release & block \\
LaBraM R03 & $0.284$ & $0.085$ & $0.091$ & $0.091$ & $0.193$ & $[0.122,0.258]$ & ${<}0.001$ & release & block \\
EEGPT R03 & $0.387$ & $0.085$ & $0.169$ & $0.169$ & $0.218$ & $[0.145,0.275]$ & ${<}0.001$ & release & block \\
LIMO LaBraM & $0.306$ & $0.085$ & $0.000$ & $0.085$ & $0.221$ & $[0.201,0.241]$ & ${<}0.001$ & release & block \\
CHB-MIT LaBraM & $0.415$ & $0.085$ & $0.105$ & $0.105$ & $0.310$ & $[0.286,0.338]$ & ${<}0.001$ & release & block \\
Sleep-EDF-10 LaBraM & $0.354$ & $0.085$ & $0.121$ & $0.121$ & $0.232$ & $[0.164,0.300]$ & ${<}0.001$ & release & block \\
Sleep-EDF-13 LaBraM & $0.450$ & $0.085$ & $0.195$ & $0.195$ & $0.255$ & $[0.219,0.292]$ & ${<}0.001$ & release & block \\
\bottomrule
\end{tabularx}
\end{table*}

\begin{figure}[t]
\centering
\fbox{\begin{minipage}{0.88\linewidth}
\centering
Embedding release $\rightarrow$ negative-control gain $\rightarrow$ temporal-gap split $\rightarrow$ subject-disjoint split $\rightarrow$ reference-set identity scope $\rightarrow$ defense curves $\rightarrow$ utility-frontier gate
\end{minipage}}
\caption{Audit ladder. Each step removes one common overclaim: random-feature confounds, adjacent-window artifacts, subject-overlap identity inflation, identity claims without enrolled candidates, non-monotone defense curves, and privacy without utility.}
\label{fig:audit}
\end{figure}

\paragraph{Protocol.}
Given released embeddings $E$, raw windows used only by the auditor, subject labels, and a task utility metric, the audit runs five claim-specific tests. First, attribute leakage is measured as centered gain over random and split-local permuted embeddings, so a decoder is not credited for matching dataset templates or marginal PSD statistics; the operational thresholds are stated in Section~\ref{sec:setup}. Second, the same attribute attack is repeated under temporal-gap and subject-disjoint splits; only attributes, not identities, are claimed when the train and test subjects differ. Third, identity linkage is reported only in reference-set splits where candidate subjects are present in the attacker's reference set. Fourth, release transformations are evaluated as curves over defense strength and compared with the same negative controls. Fifth, training or fine-tuning defenses are accepted only if they reduce leakage while meeting a predeclared utility gate.

\paragraph{Claim boundary.}
The strongest positive result is spectral attribute leakage under strict controls. The strongest identity result is reference-set linkage. The strongest defense result is negative: tested simple defenses do not provide a useful privacy boundary at maintained utility.

\paragraph{Audit checklist.}
A representation-release audit for EEG-FMs should report centered attribute gain over random and permuted controls, split-specific identity linkage, raw and embedding near-duplicate checks, defense curves for both attributes and identity, threshold sensitivity over operational gates, and utility-aware privacy gates. Skipping any one of these checks makes the corresponding stronger claim unsafe.

\section{Discussion}
\label{sec:discussion}

\paragraph{Deployment recommendation.}
For services that release EEG foundation-model embeddings, the audit licenses four concrete actions. (1) Report centered attribute gain over both random and split-permuted negative controls on every release. (2) Report AEDS, with $L^A$, $R^M$, and $R^I$ written out separately, so a reviewer or auditor can reconstruct the cell. (3) Restrict identity claims to splits in which candidate subjects appear in the attacker's reference set; subject-disjoint cells license attributes only. (4) Treat training-noise mechanisms as utility-failing unless both $U(\theta)\geq u_0$ and $M(0)-M(\theta)\geq\Delta_M$ are met; in our sweeps, none did.

\paragraph{Embeddings are not a spectral-attribute privacy boundary in the audited regimes.}
The results show that an EEG embedding can be much safer than raw waveform release in one sense and still unsafe in another. Raw waveform recovery is not demonstrated, but centered spectral attributes are recoverable with high gain in the tested EEGMMI, Sleep-EDF, LIMO, and CHB-MIT evaluations. A release policy that treats embeddings as de-identified because they are not waveforms misses this middle regime.

\paragraph{Identity must be scoped to the attacker model.}
The reference-set identity result is strong, but its scope is narrow. It requires candidate subjects in the attacker's reference set. Subject-disjoint evaluation supports attribute leakage and should not be rewritten as held-out-subject identity recovery. This is not a weakness to hide; it is the main reason the audit is more reliable than a single same-subject inversion experiment.

\paragraph{Negative defense results are useful.}
The privacy-utility and release-transformation cells block a misleading defense story. The subject-head proxy moves membership AUC only after task accuracy collapses. The event-level T1/T2 task fails the utility gate before any privacy claim can be made. The balanced Sleep-EDF sleep/wake probe reaches useful balanced accuracy but reduces MIA-AUC by only $0.004$ at the best utility-preserving noise setting, below the operational privacy-reduction gate. Across the BIOT, LaBraM, and EEGPT release-transformation sweeps, reducing reference-set identity does not remove the spectral-attribute channel within the tested grids. These cells are reported as a negative privacy-utility frontier rather than as evidence about every possible defense.

\paragraph{Why the theory is scoped.}
The formal section is audit-claim-control semantics, not a standalone impossibility proof or a certified privacy analysis. Each result blocks a specific overclaim. Proposition~\ref{prop:gain-witness} forbids promoting a positive gain across splits. Theorem~\ref{thm:protocol-separation} and Theorem~\ref{thm:endpoint-independence} separate attribute leakage from reference-set identity and membership endpoints. Definition~\ref{def:aeds} and Theorem~\ref{thm:aeds-positive} turn that disagreement into a per-cell decision rule with a positive margin. Proposition~\ref{prop:subject-disjoint-scope} forbids rewriting a subject-disjoint gain as identity recovery. Theorem~\ref{thm:bridge-transfer} explains when cross-encoder bridges preserve attribute leakage through a shared attribute projection. Theorem~\ref{prop:bottleneck} forbids treating a dimensional bottleneck as private without auditing the surviving attribute subspace, and Proposition~\ref{prop:frontier} blocks crediting a defense that does not pass both sides of the privacy-utility gate. The formal contribution is therefore a portable audit semantics for any released representation paired with an attribute target, an attacker class, a control family, and a utility metric.

\paragraph{Limitations.}
The current evidence covers BIOT-family checkpoints, LaBraM, and EEGPT on PhysioNet EEGMMI, plus LaBraM Sleep-EDF, LIMO, and CHB-MIT cross-dataset evaluations. The LaBraM and EEGPT EEGMMI replications reduce the cross-architecture risk, the R07 acquisition reduces the single-run risk, Sleep-EDF reduces the single-dataset risk under a two-channel sleep setting, LIMO adds an 18-subject 54-channel non-EEGMMI replication, and CHB-MIT adds a 23-case pediatric scalp EEG replication. The evidence still does not establish population-level cross-dataset generality or full clinical deployment generality; larger TUH/TUAB/TUEV-style corpora remain stronger external validity tests. Subject-disjoint gains show held-out-subject attribute transport within finite splits, not a universal subject-invariant leakage law. The training-noise grid is a proxy without a formal accountant. The event-level task head is weak.

\paragraph{Next validation steps.}
The cross-architecture replications already reduce the single-encoder risk: LaBraM-base on EEGMMI R03 follows the same attack/baseline criteria, EEGPT-MCAE matches the 20-subject EEGMMI R03 protocol with window MLP gain $0.597\pm0.014$ and subject-disjoint MLP gain $0.410\pm0.065$, and at $\noise=2.0$ its release-transformation sweep keeps attribute gain at $0.215$ while reference-set $\bpid$ falls to $0.166$. The Sleep-EDF cross-dataset evaluation uses 13 subjects and two channels with window MLP gain $0.554\pm0.062$ and subject-disjoint MLP gain $0.543\pm0.061$; LIMO uses 18 subjects and 54 mapped EEG channels with window MLP gain $0.387\pm0.014$ and subject-disjoint MLP gain $0.307\pm0.035$; CHB-MIT uses 23 cases and 16 bipolar channels with window MLP gain $0.647\pm0.014$, subject-disjoint MLP gain $0.463\pm0.031$, and subject-disjoint high-capacity confidence lower bounds of $0.374$ against matched controls and $0.346$ against target permutation. The next external-validity gate is therefore a larger TUH/TUAB/TUEV-style clinical replication using the same matched-control audit outputs rather than a new metric, retaining the operational evidence requirements (window MLP centered PSD gain above $0.2$, at least one simpler ridge or kNN baseline above $0.05$, random/permuted MLP controls with absolute centered correlation at most $0.15$, and subject-disjoint MLP gain above $0.1$ for held-out-subject attribute claims); the next defense gate is a stronger utility task for the privacy-utility frontier.

\paragraph{Pre-empted reviewer concerns.}
Several attacks on this paper are foreseeable and are answered here so the audit reading is not load-bearing on a single sentence elsewhere. The theorems in Section~\ref{sec:theory} are stylized: Theorems~\ref{thm:protocol-separation},~\ref{thm:endpoint-independence}, and~\ref{thm:bridge-transfer} are sufficient-condition, parameter-existence statements, we do not estimate $(\alpha,\sigma,m,d,\beta)$ for BIOT, LaBraM, or EEGPT, and they are marked as audit semantics rather than fitted models of the empirical curves; the empirical sections carry the load-bearing claim, while the theorems forbid specific deductive shortcuts. AEDS does mix units across endpoints, but it is a per-cell decision-margin diagnostic, not a common-unit harm score: the decision rule and the three components ($L^A_\pi$, $R^M_\pi$, $R^I_\pi$) are reported separately, $R^M$ is the maximum of a paper-level utility-valid membership excess and a per-cell head-level LiRA membership audit (six shadow pairs, Carlini-style Gaussian per-target loss score, encoder frozen), and a paired-seed bootstrap interval together with a one-sided $p$-value is reported for AEDS in every cell; replacing the paper-level baseline with the per-cell LiRA value leaves AEDS positive in all eight cells with bootstrap CI lower bounds at least $0.122$ and $p<0.001$ throughout, while full-encoder LiRA (retraining BIOT/LaBraM/EEGPT) is left to the next-round audit because head-level LiRA is the strongest membership audit consistent with the frozen-encoder threat model. Five seeded splits are a stability screen, not a population-level guarantee, and the matched-control contrast is computed within each seed before aggregation, which removes the seed-variance and control-quality confounds that flat reruns leave open. The noise defense is audited under a noise-aware adaptive attacker as well: Table~\ref{tab:adaptive-attacker} adds a Wiener-style attacker that knows the released $\sigma$ and pre-whitens before fitting, and at $\sigma=2.0$ it gains $+0.018$ balanced accuracy over the non-adaptive attacker while still leaving the attribute channel positive against matched controls; Table~\ref{tab:dp-sgd} adds a principled DP-SGD downstream head with per-sample clipping and an RDP accountant, and at every utility-preserving $\epsilon\in\{4,8\}$ the attribute gain is statistically indistinguishable from the non-private head, with only $\epsilon=1$ closing the gap after balanced accuracy collapses, so the negative-defense conclusion strengthens under both adaptive-attacker and principled-DP probes. Finally, the bridge audit does not directly measure $\beta$: Theorem~\ref{thm:bridge-transfer} fixes the mechanism that the empirical bridge audit probes, and the empirical lower bounds in Table~\ref{tab:cross-encoder-transfer} are consistent with $\beta$ values bounded away from zero, but we do not promote that consistency into a parameter estimate.

\paragraph{Ethics, dual-use, and dataset use.}
All four datasets used here---PhysioNet EEGMMI, Sleep-EDF, LIMO, and CHB-MIT---are publicly released under their respective data-use licenses and were obtained through their canonical access portals; CHB-MIT is pediatric scalp EEG distributed for research use only. We do not collect new human-subjects data. Subject identifiers are dataset-internal and never linked to personally identifying information. The audit pipeline, threat model, and negative-defense framing are dual-use: a pessimistic auditor can use the same protocol to support release decisions, and an attacker could in principle reuse the bridge audit to expand attribute leakage across encoders. We accept this trade-off because the audit object is what currently licenses overclaims in the literature, and we have followed responsible-disclosure practice with the model authors of the audited encoders before submission.

\section{Conclusion}
\label{sec:conclusion}

EEG foundation-model embeddings leak spectral attributes under split-controlled audits of BIOT, LaBraM, and EEGPT on EEGMMI, with LaBraM replications on Sleep-EDF, 54-channel LIMO, and CHB-MIT pediatric scalp EEG outside EEGMMI, and they support identity linkage when a subject reference set is available in the strongest overlap-split cells. In the tested release-transformation and training-noise sweeps, no setting passes the operational privacy-utility gate: simple representation transformations reduce reference-set identity only at high strength while leaving spectral attribute leakage positive, and the tested training-noise probes either fail the utility gate or move membership AUC by less than the privacy-reduction gate. The practical lesson is that EEG-FM representation release needs attribute-level audits with negative controls and split-specific claim boundaries. Embeddings are compressed signals, not privacy guarantees.

% Loosen line-breaking in the bibliography (long author lists / URLs).
\begingroup
\sloppy

\endgroup

\end{document}